\newcommand{\Real}{\mbox{${\mathbb R}$}}
\newcommand{\sss}{\mbox{\boldmath $s$}}
\newcommand{\xx}{\mbox{\boldmath $x$}}
\newcommand{\yy}{\mbox{\boldmath $y$}}
\newcommand{\ssigma}{\mbox{\boldmath $\sigma$}}
\newcommand{\zz}{\mbox{\boldmath $z$}}
\newcommand{\D}{\mbox{$\Delta$}}
\newcommand{\defeq}{\stackrel{\textup{def}}{=}}
\DeclareMathOperator*{\argmax}{arg\,max}
\newcommand{\mypm}{\mathbin{\mathpalette\@mypm\relax}}
\newcommand{\@mypm}[2]{\ooalign{%
  \raisebox{.1\height}{$#1+$}\cr
  \smash{\raisebox{-.6\height}{$#1-$}}\cr}}
\newtheorem{thm}{Theorem}
\newaliascnt{lem}{thm}
\newtheorem{lem}[lem]{Lemma}
\newtheorem{definition}[thm]{Definition}
\newaliascnt{mydef}{thm}
\newaliascnt{prop}{thm}
\newtheorem{prop}[prop]{Proposition}
\newaliascnt{clo}{thm}
\newtheorem{clo}[clo]{Corollary}
\theoremstyle{nonumberplain}
\newtheorem{proof}{Proof}
\newcommand{\etal}{{\em et al.~}}
\DeclareMathOperator*{\argmin}{arg\,min}
\def\E{\ensuremath{\mathbf{E}}}
\newcommand{\subtitle}[1]{%
  \posttitle{%
    \par\end{center}
    \begin{center}\large#1\end{center}
    \vskip0.5em}%
}
\newcommand{\HL}[1]{\textcolor{blue}{#1}}
\newcommand{\suppress}[1]{}
\newcommand{\be}{\begin{equation}}
\newcommand{\ee}{\end{equation}}
\newcommand{\bea}{\begin{eqnarray}}
\newcommand{\eea}{\end{eqnarray}}
\newcommand{\bean}{\begin{eqnarray*}}
\newcommand{\eean}{\end{eqnarray*}}
\begin{document}

\title{Learning Dynamics and the Co-Evolution of Competing Sexual Species}
%Sex, Learning and Co-evolution}
% Sex, Coevolution, and Satisfiability}
%Zero-Sum Evolutionary Competition Between Boolean Functions Induces Periodicity \\ or \\ 
%Dynamics and Diversity from Evolutionary Competition}
%\title{Diversity from Evolutionary Competition: a Consequence of Learning Dynamics}

\author{
Georgios Piliouras\thanks{Supported in part by
 SUTD grant SRG ESD 2015 097, MOE AcRF Tier 2 Grant  2016-T2-1-170 and a NRF Fellowship. Part of the work was completed while GP was a CMI Wally Baer and Jeri Weiss postdoctoral  scholar at Caltech. Part of the work was completed while GP was a Simons Institute research fellow.}\\Singapore University of Technology and Design\\georgios@sutd.edu.sg
\and 
Leonard J. Schulman\thanks{Supported in part by NSF grants 1319745 and 1618795, and, while in residence at the Israel Institute for Advanced Studies, by 
a EURIAS Senior Fellowship co-funded by the Marie Sk{\l}odowska-Curie Actions under the 7th Framework Programme.}\\California Institute of Technology\\schulman@caltech.edu  
}

\date{}

\maketitle

\begin{abstract}
We analyze a stylized model of co-evolution between any two purely competing species (e.g., host and parasite), both sexually reproducing. 
Similarly to a recent model of Livnat \etal~\cite{evolfocs14} the fitness of an individual depends on whether the truth assignments on $n$ variables that reproduce through recombination satisfy a particular Boolean function.
Whereas in the original model a satisfying assignment always confers a small evolutionary advantage, in our model the two species are in an evolutionary race with the parasite enjoying the advantage if the value of its Boolean function
matches its host, and the host wishing to mismatch its parasite. Surprisingly, this model makes a simple and robust behavioral  prediction. The typical system behavior is \textit{periodic}. These cycles stay bounded away from the boundary and thus, \textit{learning-dynamics competition between sexual species can provide an explanation for genetic diversity.} This explanation is due solely to the natural selection process. No mutations, environmental changes, etc., need be invoked. 
% These periodic epochs also play a critical role for species survival. 

The game played at the gene level may have many Nash equilibria with widely diverse fitness levels. Nevertheless, sexual evolution leads to gene coordination that implements an optimal strategy, i.e., an optimal population mixture, at the species level. Namely, the play of the many ``selfish genes'' implements a time-averaged correlated equilibrium where the average fitness of each species is exactly equal to its value in the two species zero-sum competition. 

Our analysis combines tools from game theory, dynamical systems and Boolean functions to establish a novel class of  conservative dynamical systems. %, similar to classic Hamiltonian dynamics. % (\textit{e.g.}, an ideal pendulum).

%There has been much progress in recent years on the problem of quantifying the power of evolution as an algorithm~\cite{Valiant-book,evolfocs14}. 
%\cite{ITCS15MPP} made the discovery that evolution of a sexual species in a fixed environment, following the dynamics of~\cite{Nagylaki1,PNAS2:Chastain16062014}, leads in almost all cases to a genetic monoculture. Several reasons have been offered for the discrepancy between this prediction and observed diversity. We propose that another reason is at play (and may operate on even faster time scales): a chief component of a species' environment consists of other species. The co-evolution of competing species itself may drive dynamics which do not equilibrate from most starting conditions, and which moreover guarantee entropy in the genotypes for all time. We establish this as a theorem in the simplest setting of ``binary phenotypes", which is the two-species extension of the study of learning of boolean functions initiated in~\cite{evolfocs14}.
 \end{abstract}

%\noindent
\textbf{Keywords:}   Hamiltonian, Lyapunov Function, Poincar\'{e}-Bendixson Theorem, Nash Equilibria, Correlated Equilibria,
% Kullback-Leibler Divergence, 
% Topology, 
 Potential Game, Team Zero-Sum Game, Boolean Functions,
 % Haploid, 
 Replicator Dynamics, Multiplicative Weights Update.

\begin{comment}
\begin{figure}[hbtp]
   \centering
   \vspace{-0.95cm}
\includegraphics[width=75mm]{rfig4.pdf}
   \vspace{-1cm}
\caption{An example of a Markov Chain model of fitness landscape evolution.}
\label{fig1}
\end{figure}
\end{comment}

\thispagestyle{empty}
\newpage
\setcounter{page}{1}

\pagenumbering{arabic}

%\newpage
%\clearpage
%\setcounter{page}{1}

%\begin{quote}
%"Well, in our country," said Alice, still panting a little, "you'd generally get to somewhere else�if you run very fast for a long time, as we've been doing."
%"A slow sort of country!" said the Queen. 
%"Now, here, you see, it takes all the running you can do, to keep in the same place. If you want to get somewhere else, you must run at least twice as fast as that!"
%\begin{flushright}-Red Queen
%\end{flushright}
%\end{quote}

% in Through the Looking-Glass and What Alice Found There.
% from Carroll, Lewis:
%Publisher Macmillan Publication date 1871

\section{Introduction}

\begin{comment}
What makes this result a significant breakthrough?
This needs to be spelled out. 
Need to add a paragraph dedicated just to this.

Periodicity outside 2x2 games\\

Pointing out the point-wise convergence to CE?\\
This is a setting where PoA analysis is non-applicable, as the social welfare is always constant and hence the PoA is equal to 1. Nevertheless, and unlike ZS games, in team zero-sum games there is huge integrality gap and different equilibria (or more generally no-regret sequences) can result to radically different predictions for the utility of each company. For such games that capture the competitions e.g. between companies we currently have no tools for analyzing them. 
Connections to Dueling algorithms paper?

********
\end{comment}

\bigskip

\begin{comment}
Here is a problem\\
ii. It's an interesting problem\\
iii. It's an unsolved problem\\
iv. Here is my idea\\
v. My idea works (details, data) \\

Example?\\
\end{comment}

An exciting recent line of work in the theory of computation has focused on the algorithmic power of the evolutionary process (Valiant~\cite{Valiant-book}, Livnat et al.~\cite{evolfocs14,CACM}). The latter two papers identified as interesting the case of a sexually reproducing, haploidal, and panmictic (explained below) species, evolving in a fixed environment according to variants of Multiplicative Weights Update dynamics~\cite{Chastain:2013:MUC:2422436.2422444,PNAS2:Chastain16062014}---which are typically referred to as ``replicator dynamics'' in the evolutionary dynamics literature~\cite{Weibull}. Curiously, however, 
Mehta et al.~\cite{ITCS15MPP} made the discovery that these dynamics lead in the long run (in almost all cases) to a genetic monoculture. This rather contradicts the evidence of natural diversity around us. 

% There has been much recent progress on the problem of quantifying the power of evolution as an algorithm (Valiant~\cite{Valiant-book}, Livnat et al.~\cite{evolfocs14,CACM}). Mehta et al.~\cite{ITCS15MPP} made the discovery that evolution of a sexual species in a fixed environment, following the ``replicator" (``multiplicative weights") dynamics~\cite{Nagylaki1,PNAS2:Chastain16062014}, leads in almost all cases to a genetic monoculture. 

Several plausible explanations exist for this discrepancy, including:
(a) mutations~\cite{ITCS17MPPPV}, (b) speciation (\textit{e.g.}, the Bateson-Dobzhansky-Muller model)~\cite{mathbio1}, (c) the mathematical assumptions are too far from reality, (d) ``in the long run" is longer than geologic time.
%This is a model that dates back to 1909.
%The selective pressure is against the heterozygous condition (AB), meaning that the first generation of offspring is born at a disadvantage of some sort - reduced/no fertility or something that results in reduced viability of the offspring.
 %Therefore the heterozygous individuals (AB) die out, separating the dominant homozygotes (AA) and the recessive homozygotes (BB). This separation is the speciation event.
%
There is, however, a long-standing argument,
% broadly going back to Darwin but 
 that there is another (and perhaps more important) factor driving diversity; 
to our knowledge this case was first compellingly laid out by Ehrlich and Raven in 1964~\cite{EhrlichR64}: ``It is apparent that reciprocal selective responses have been greatly underrated as a factor in the origination of organic diversity."
(Already Darwin noted the significance of co-evolution, e.g., between orchids and moths that feed on their nectar; but the proposed implication for diversity seems to have come later.)
\suppress{
% We argue however that there is another factor driving diversity, and venture that it operates at even shorter time scales and therefore may often be the primary cause. 
Already Darwin noted that a major component of a species' environment  is not exogenous to the evolutionary process (conditions such as sunshine, rainfall etc.)---but
is its interactions with other species, and that these \textit{are themselves undergoing evolution,} that is to say co-evolution affected by the same interactions.
}
In the ensuing decades this idea played a role in the \textit{Red Queen Hypothesis}~\cite{valen73} and was advanced as an explanation of an advantage of sexual over asexual reproduction~\cite{Bell82}.

Apart from empirical study (e.g.,~\cite{brodie2002the,ThompsonC02,10.2307/3298524,doi:10.1086/376580}), 
the dynamics of co-evolution have also been studied mathematically, but primarily (explicitly or implicitly) for asexual reproduction---dynamics in which the abundance of a genome changes over time in proportion to its fitness (possibly with mutations), as in the work of Eigen, Schuster and others~\cite{Eigen71,EigenS79,NowakO08,Thompson94,Thompson05}. The case of sexual reproduction, however, is quite different. There is a good mathematical model for these dynamics, called the 
``weak selection" model~\cite{Nagylaki1}, but effects of co-evolution are not yet understood in this model.

%\begin{comment}
We study a specific class of systems in this model, and provide  
a quantitative study of the evolutionary dynamics of sexual species in highly competitive (``zero sum") interactions. This study supports the thesis of Ehrlich and Raven, that competition drives diversity, in a strong form: not only does a genetic monoculture not take over, but in fact the entropy of the species' genomes is bounded away from $0$ for all time. 
Thus we support a rationale for ecosystem diversity without invoking 
%longer-term 
mutation, speciation or environmental change. 

%\end{comment}
% 
 \suppress{
As we show in this paper, the co-evolution of competing sexual species
%---or in algorithmic terms, online strategy  adaptation by another no-regret learning algorithm---
can 
drive dynamics which  (from most starting conditions) do not equilibrate, but instead oscillate. Moreover, these dynamics  guarantee positive entropy in the genotypes for all time. Both of these effects are contrary to the above-referenced ``monoculture" phenomenon for evolution of a single sexual species against a fixed environment.
}

A sexual species under weak evolutionary pressures (to be made precise) can be modeled, game-theoretically, as a \textit{team,} whose players are the genes \cite{evolfocs14,PNAS2:Chastain16062014}. A team in a multiplayer game~\cite{Marschak55,SchV17} is a set of players who share a common payoff but use independent randomness.
In our setting we have two teams that compete against each other for survival.
 Learning dynamics in such games create dynamical systems entirely unlike those explored in the no-regret learning in games literature~\cite{pota}. For example although there have been quite a few papers arguing about nonequilibrium limit cycles in game dynamics 
 \cite{NonConvergingDask,paperics11,sigecom11,CRS16,ChaosNIPS17}, these typically explore small games with a maximum of two or three players, each with two or three available strategies. Even in these settings the analysis is typically intricate and is based on case-by-case observations that do not generalize to large classes of games. In a handful of cases where non-equilibrium behavior is proven \cite{Soda14,PiliourasAAMAS2014,CyclesSODA18} in larger games, the non-equilibrium behavior is typically non-periodic, and thus to a large extent unpredictable. In contrast, we present a large parametric class of multi-agent games for which we prove periodicity from almost all initial conditions. 
 
{\bf Class of Games:} We establish our results in the restrictive setting of ``Boolean phenotypes". To explain, we will be studying a zero-sum competition between two species $A$ and $B$. Organism $A$ has $n$ genes, and organism $B$ has $m$ genes. Each is haploid (possessing, for each gene, exactly one of the possible alleles), so that a genotype of species $A$ is a vector $s=(s_1,\ldots,s_n)$, and that of species $B$ is $\sigma=(\sigma_1,\ldots,\sigma_m)$. In the Boolean-phenotype model there are Boolean-valued functions and a $2\times 2$ payoff matrix $U$ such that the result of an encounter between $s$ and $\sigma$ is a payoff of $U(f(s),g(\sigma))$ (to $A$, and minus this to $B$).\footnote{Like any multiplayer game, such team zero-sum games have Nash equilibria; but unlike the two-\textit{player} zero-sum game which they resemble, the team game generally has a positive duality gap. (See~\cite{SchV17} where this is worked out, and earlier~\cite{VONSTENGEL1997309} for the case of a team vs.\ a single player.) This gap creates an opportunity for very rich dynamics when each gene continually adjusts its allele frequencies to the competition.} 
As compared with the full reach of evolutionary dynamics this is a limited setting, but
% apart from being a feasible first step,
 there are two good reasons to examine it.

\begin{enumerate}
\item This already constitutes an extension to \textit{two} competing and adapting learners, of 
% two-species (or two-competing-learners) extension of 
the study of evolutionary learning of Boolean functions initiated in~\cite{evolfocs14}.
\item There are biological examples which approximately fit this assumption---not with respect to the entire phenotype, but w.r.t.\ that aspect of the phenotype which is critical to the two-species interaction. E.g., the length of a hummingbird's beak vs.\ that of a flower; the size of a crab's pincers vs.\ the shell thickness of its prey; the choice of protein coatings employed by a microbe, and the corresponding immune response. 
\end{enumerate}

%The evolutionary dynamics are also completely different from those of an asexual species. 
% Limit cycles \cite{papadimitriou2015computational}
 %Team ZS \cite{ITCS17Sch}

\begin{comment}
\subsection{Related work}
\HL{Absorb this subsection into surrounding text?}

Evolution in Computer Science:
Valiant,
Papadimitriou,
Vazirani\\

Evolution, Games, Boolean functions and MWU dynamics\\

Cycles in Computer science/game theory\\
Papadimitriou and Vishnoi on Poincare-Bendixson cycles
Daskalakis, Papadimitriou metal: learning in games
Kleinberg, Ligett, Piliouras, Tardos,:
Ligett, Piliouras

Team-zero-sum games:
Schulman, Vazirani
\end{comment}

%\section{Two-Team Zero-Sum Evolutionary Games}
Not every $2 \times 2$ game is, of course, zero sum; \textit{e.g.}, above, the interaction may be favorable to the hummingbird and to the flowering plant. However, we are interested in the effects of competition, and competition, in its purest form, is zero-sum. 
%
%and to even game theoretic settings \cite{ITCS17Sch}
%
There is good reason, however, to view this zero-sum interaction not as between two \textit{players} but as between two \textit{teams}, with the members of each team being the \textit{genes} of the species (a point of view  influentially advocated by Dawkins~\cite{Dawkins1976}). The fact that the genes are, as ``agents", pursuing their optimization independently of one another, is crucial to the dynamics of sexual evolution.

\begin{comment}
\textbf{Statics:} First we describe the two-team zero-sum game formalism.
%; for more on the game theory of this formalism see~\cite{SchV17}. 
Given any particular genome $s=(s_1,\ldots,s_n)$ of Species $A$, and any genome
$\sigma=(\sigma_1,\ldots,\sigma_m)$ of Species $B$, the result of an encounter yields a payoff $U(s,\sigma)$ to Team $B$ (and minus this to Team $A$); thus $U$ fully specifies the game. Observe that this is a special kind of multiplayer game in which after any encounter the payoff to all members of a team is identical; that is why we call this a ``team game". The team game property reflects the fact that genes share a common interest in the well-being of the organism they reside within.
\end{comment}

%\textbf{Dynamics:} \HL{REPLICATOR EQNS...}

\textbf{Dynamics:} The most tractable model of sexual evolution is to have the population at all times  be in a \textit{product distribution evolving according to the replicator equation}.
(The correspondence between this continuous dynamic and the discrete-time MWU was described in~\cite{Kleinberg09multiplicativeupdates,ITCS15MPP,PNAS2:Chastain16062014}; 
 %\HL{I smashed these refs together, see if you think it needs to be broken up} 
MWU is an ubiquitous meta-algorithm with numerous connections within the field of computer science~\cite{Arora05themultiplicative}.
In this paper we work directly in the replicator framework.)
The replicator equation~\cite{Taylor1978145,Schuster1983533}, given below in Eq.~\ref{eq:repli_A}, 
 is among the basic tools in mathematical ecology, genetics  %and evolution.
and the mathematical theory of  evolution.
Replicator dynamics and MWU have been studied extensively in numerous classes of games.
For example, in the case of potential games where the utilities of all agents are perfectly aligned, these learning processes are known to converge to Nash
equilibria and in fact generically to pure (non-randomized) Nash~\cite{Kleinberg09multiplicativeupdates}. On the contrary in zero-sum games \cite{Sato02042002,Akin84,Soda14} it is known that 
they can exhibit complex chaotic behavior, highly sensitive to initial conditions (butterfly effect). 

In our setting, an exceedingly rare form of structure manifests. 
\textit{We identify a novel class of conservative systems,}  analogous to Hamiltonian dynamics (\textit{e.g.} ideal pendulum).
  The conserved quantity does not have a meaning of energy. 
  %, nor do the dynamics (as they would e.g., in quantum mechanics) preserve any obvious notion of volume.  
  This ``constant of the motion" is a new type of structure that has not been reported before that emerges from the combination of evolutionary dynamics and Boolean logic (Lemma \ref{lem:inv}). Possession of a conserved quantity is no reason to expect periodicity, as the dynamics are high dimensional.\footnote{$m+n$ variables, minus $1$ degree of freedom for the constant of the motion.}  The orbits are shown to be periodic via a novel type of embedding argument: Each orbit can be projected to 
 a, possibly different, planar (and thus periodic) conservative system without any loss of information.
 % over the $n+m$ number of variables.  
 (Proposition \ref{prop:1} and Lemma \ref{lem:safe_planar}). %Thus, we derive that:
 % implying the periodic.
  
\begin{comment}
share characteristics of both above classes of games. 
Within a team, all agents have perfectly aligned incentives as in a potential game.
On the other hand, the two teams compete against each other and have diametrically opposed goals, encoding a zero-sum competition.
Surprisingly, the solution concept that the dynamic prescribes for this hybrid class of games, halfway between potential and zero-sum games,
is in some sense a midway point between equilibration and chaos, periodicity. Specifically, we have that:
\end{comment}

\smallskip
{\bf Theorem 1:}
Given any two-team zero-sum game defined by two Boolean functions, so long as all  equilibria/fixed points are isolated, then all but a zero measure set of initial conditions lie on periodic trajectories of the replicator dynamics. (Formal statement in Section \ref{section:analysis}, Theorem \ref{thm:main1}).

  \smallskip
 
\begin{figure}[hbtp]
\centering
\begin{subfigure}{.5\textwidth}
  \centering
  \includegraphics[width=.7\linewidth]{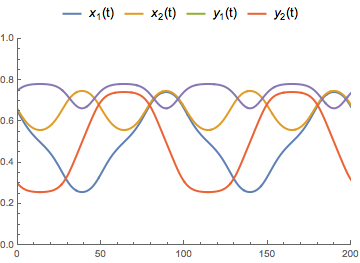}   % used to 10      % used to be 5
 \caption{XOR-XOR initial condition $(0.65, 0.66, 0.3, 0.75)$}
  %\\ $\text{ }$}
  \label{fig:sub1}
\end{subfigure}%
\begin{subfigure}{.5\textwidth}
  \centering
  \includegraphics[width=.7\linewidth]{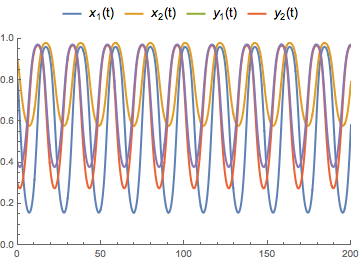}  % used to be 9    % used to be 7
  \caption{OR-OR initial condition $(0.8, 0.9, 0.3, 0.4)$}
    \label{fig:sub2}
\end{subfigure}
\caption{Given generic initial conditions the system is periodic regardless of the pair of Boolean functions defining it.
%Generic conditions lead to periodicity. 
The initial conditions, trajectories are of the form $(x_1(t),x_2(t),y_1(t),y_2(t))$. 
%Breaking 
%the $\{|\argmin D_{\mathrm{KL}}(\mathbf{p}_{i}\| \mathbf{x}_{i}(0))| \geq 2$ or  $|\argmin D_{\mathrm{KL}}(\mathbf{q}_{j}\| \mathbf{y}_{j}(0)) |\geq 2\}$ condition leads to
 %the emergence of periodic trajectories.
 }
\label{fig:test}
\end{figure}

% One way of viewing this result is that 
 % (combining tools from game theory, dynamical systems, and Boolean functions) 
 
 %
 %Unlike standard benchmark Hamiltonian systems we do not provide a closed form description of the conservative quantities.
 %In a sense, the proof is (at least partially) nonconstructive.
% This result is particularly compelling, when comparing it against recent computational hardness results for the problem of detecting even approximate cycles in dynamical systems of just two variables~\cite{papadimitriou2015computational}.
% In our high dimensional systems, we can provide a certificate of perfect periodicity circumventing the need to simulate the system that leads to PSPACE-hardness results.
 %
%It is important to note that the genericity of initial conditions is necessary.
An immediate corollary of our theorem is that since all monocultures are fixed points of the dynamics, 
%\HL{Used to say "equilibria" but that would be interpreted in the game-theoretic sense which is wrong}, 
almost all initial conditions lie on
% (periodic) 
trajectories that remain bounded away from the monocultures. 
\begin{comment}
\HL{Actually, do we know that *all* non-monoculture initial conditions stay bounded away from monoc's?
This is not true. You can force is some rare games initial conditions where one team get's stuck at a strange fixed point XOR-XOR 50-50 like and then the other team clobbers its.}
This theorem implies a fundamental conclusion about ecological diversity, which is that from any starting state in which 
 both species have positive entropy, there is a $c>0$ s.t.\ the total system entropy remains at least $c$ for all time. \HL{Last sentence requires previous question to be answered "yes"}
 \end{comment}
 % We conjecture this to hold even beyond the boolean phenotype framework.
%
It is interesting to note that although the above theorem is robust to the choice of the competing Boolean functions, \textit{e.g.}, XOR-XOR,
AND-AND, AND-XOR, etc., the  topology of the periodic orbits is highly sensitive to the choice of these functions (see Figure \ref{fig:test}).

 Besides proving periodicity, we analyze this phenomenon from a game-theoretic/optimality lens as well.
 We show that the time-average play over the strategy outcomes of the team zero-sum game is a correlated equilibrium. %\footnote{\textcolor{blue}{This is sort of "hanging" here, not expressed in Theorem 2}}
 Unlike zero-sum games, in team zero-sum games \cite{SchV17} (even Boolean team zero-sum games) Nash equilibria may include outcomes of widely varying utilities for each team (the notion of value no longer exists). 
Nevertheless, we show something extremely surprising about these evolutionary dynamics on 
 ``selfish" genes: sexual evolution leads to gene coordination that ``solves" the zero-sum competition at the species level (see Figure \ref{fig:test2}).
 Thus, in three different ways, sexual evolution implements an optimal strategy: 
 
 %\footnote{\textcolor{blue}{Note revised text in this paragraph}}
 
 \smallskip
 
{\bf Theorem 2:}
Given any periodic orbit, the time average distribution of play over strategy outcomes converges point-wise to a {\bf correlated equilibrium of the team zero-sum game}.
Moreover, the time average Boolean output of each team converges to its {\bf unique Nash equilibrium strategy of the $2\times 2$ zero-sum game $U$}. The time average expected utility of each agent converges to the {\bf value of its team in the zero-sum game $U$}.  (Formal statement in Section \ref{section:time_average}, Theorems \ref{thm:correlated} and \ref{thm:main2}).

 \smallskip

% \footnote{\textcolor{blue}{Need to point to Figure 2 from somewhere around Theorem 2}}

\begin{figure}[hbtp]
\centering
\begin{subfigure}{.5\textwidth}
  \centering
  \includegraphics[width=.7\linewidth]{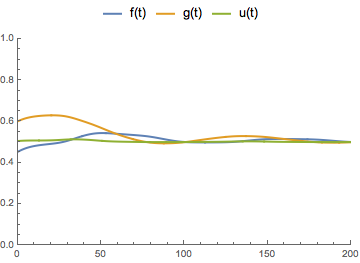}   % used to 10      % used to be 5
 \caption{XOR-XOR initial condition $(0.65, 0.66, 0.3, 0.75)$}
  %\\ $\text{ }$}
  \label{fig:sub1}
\end{subfigure}%
\begin{subfigure}{.5\textwidth}
  \centering
  \includegraphics[width=.7\linewidth]{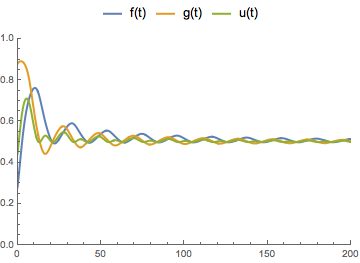}  % used to be 9    % used to be 7
  \caption{OR-OR initial condition $(0.8, 0.9, 0.3, 0.4)$}
    \label{fig:sub2}
\end{subfigure}
\caption{The time averages of the outputs of both teams $(f,g)$ converge to the fully mixed NE of the game ($50\%, 50\%$). The time average of the expected utility of the first team (and hence the second as well) converges to the value of the game (which in this case is equal to $0.5$).
%Generic conditions lead to periodicity. 
%The initial conditions, trajectories are of the form $(x_1(t),x_2(t),y_1(t),y_2(t))$. 
%Breaking 
%the $\{|\argmin D_{\mathrm{KL}}(\mathbf{p}_{i}\| \mathbf{x}_{i}(0))| \geq 2$ or  $|\argmin D_{\mathrm{KL}}(\mathbf{q}_{j}\| \mathbf{y}_{j}(0)) |\geq 2\}$ condition leads to
 %the emergence of periodic trajectories.
 }
\label{fig:test2}
\end{figure}

Sexual evolution solves the game optimally at the level of the species, despite the fact that it works on the level of the genes and that any signal coming from the species level  interaction is constantly getting scrambled due to genetic recombination.

 {\bf Structure of the paper:}  Section \ref{section:prelim} starts with an exposition of the model,  formal definitions of the game setting, as well as  the learning dynamic.
 Section \ref{section:analysis} contains the analysis of the periodicity of the model and starts with a high level description of the proof structure. 
 Section \ref{section:time_average} develops the connections to game theory, Nash and correlated equilibria and analyzes the time-average of payoffs and system behavior. The paper ends with a discussion section (Section \ref{section:Discussion}).
 The Appendix contains a section on background material on topology and dynamical systems.

\section{Preliminaries}
\label{section:prelim}
%In this section we formally describe the two player coordination games, the dynamics under consideration, and its equivalence with
%MWU in evolution. First we start with some notations.
%\medskip

\paragraph{Notation:}  Vectors are in bold-face, and unless otherwise indicated are considered as column vectors. The transpose of $\xx$ is $\xx^T$. The $i^{th}$ coordinate of $\xx$ is $x_i$. 
%The vector we get by removing entry $x_i$ from $\xx$   $\sim \xx_{-i}$.
%, 
% and for $l<k$, $\xx(l:k)$ denote subvector
%$(x_l,x_{l+1},\dots,x_k)$. %For two vectors $\xx,\yy$ let $(\xx,\yy)$ denote the concatenation of two vectors.
$\xx_{-i}$ is the vector derived by removing $x_{i}$ from $\xx$.  
% For $k\in \Real$, $\kk_{n\times n}$ represents $n\times n$ matrix with all entries set to $k$. 
%We denote set $\{1,\dots,n\}$ by $[1:n]$.
%$\xx^{-i}$ denotes the vector $\xx$ with $i^{th}$ coordinate removed. $\ones$ and $\zeros$ represent all ones and all zeros vector
%respectively of appropriate dimension.
%$int \; A$ is the interior of set $A$.
$|S|$ is the cardinality of set $S$. 
$\Delta(S)$ is the probability simplex with support set $S$, \textit{i.e.}, $\{(x_1,x_2,\dots x_{|S|}): x_i\geq 0~ \forall i,~ \sum_{i=1}^{|S|} x_i =1\}$.

\paragraph{%Selfish Genes and 
Zero-Sum Games among Species:}

%(A, B) is a rescaled partnership or zero sum game if and only if
%c?�A?=?�B? forall ??IR0n,??IR0m (10)
%for some c =6 0, where c > 0 for a rescaled partnership game and c < 0 for a rescaled zero sum game.
%Proof: Hofbauer and Sigmund (1998, p128-9). QED
%Note that in 2 � 2 games the condition (10) reduces to ca0 = b0 where a0 = a11 +a22 ?a12 ?a21 and similarly b0 = b11 +b22 ?b12 ?b21. This will be satisfied for some nonzero c provided neither a0 or b0 are zero. Or, in other words, almost all 2 � 2 games are either rescaled zero sum games or rescaled partnership games.4 Thus the results we establish here include most existing results on 2 � 2 games as special cases.
% http://homepages.econ.ed.ac.uk/~hopkinse/perturb.pdf
%Learning in Perturbed Asymmetric Games
%Josef Hofbauer, Ed Hopkins
% For example, the class of rescaled zero sum games includes all 2 � 2 games with unique mixed strategy equilibria, which have been of particular interest both to the- orists, e.g. Fudenberg and Kreps (1993), and to experimentalists, e.g. Erev and Roth (1998). 
% Fudenberg, D., Kreps D. (1993). ?Learning mixed equilibria,? Games Econ. Be- hav., 5, 320-367.
% Erev, I., Roth, A.E. (1998) ?Predicting how people play games: reinforcement learning in experimental games with unique, mixed strategy equilibria,? Amer. Econ. Rev., 88, 848-881.
% 4 Ellison and Fudenberg (2000) make a similar observation, but using the terms ?games of conflict? and ?games of coordination?.
% https://www.math.ucla.edu/~tom/Game_Theory/mat.pdf

\suppress{
In this paper we study evolution through the lens of game theory as in \cite{Arxiv:DBLP:journals/corr/abs-1208-3160,ITCS:DBLP:dblp_conf/innovations/ChastainLPV13,PNAS2:Chastain16062014,evolfocs14}, where evolution can be interpreted as  a game between genes. We call these games, selfish gene games. This term seems rather fitting, since the genes' actions can be interpreted as the actions of a boundendly rational agent,  and also agrees with the well established analogy of the selfish gene introduced by Dawkins \cite{Dawkins1976}.

 In our case we have two teams of genes, the genes of the host organism and genes of the parasite. The members of each team have identical interests (the well being of the organism in which they reside so that they maximize their own probabilities of surviving through reproduction to the next generation) whereas the two teams have contrasting interests (zero-sum). We can think of this setting abstractly as a game where the agents are the genes of each organism and the available 
strategies to each gene are its corresponding alleles. 
}

% We now go into greater detail on  these zero-sum team games, or, in Dawkins' memorable phrase~\cite{Dawkins1976}, ``selfish gene" games.
We have two species, $A$ and $B$. 
$A$'s genome has $n$ genes, and $B$'s has $m$ genes. 
Both are haploids which means that each organism has one allele per gene. (This is the simplest possibility. Humans are diploid, and other numbers are possible.) The proportion of organisms of species $A$ that have allele $\gamma$ in their $i$-th gene
 is written $x_{i\gamma}$.
Similarly for $B$ the proportion of organisms  that have allele $\delta$ in their $j$-th gene is
 $y_{j\delta}$. Clearly, $\sum_{\gamma} x_{i\gamma}=1$ and $\sum_{\delta} y_{j\delta}=1$ for any $i$ or $j$.  
 In this paper we focus on the simplified setting that each gene of the host/parasite organism has two variants/alleles (\textit{i.e.}, allele $0$ and $1$). Thus, the genotype of an organism in $A$ (resp.\ $B$) is a Boolean vector of length $n$ (resp.\ $m$) %Hence, each of these vectors $\mbox{\boldmath $x_i, y_j$}$ can be thought of as a randomized strategy of the corresponding agent/gene.
and we will abbreviate notation by writing $x_{i0}$ as $x_{i}$.
The vector $\mbox{(\boldmath $x, y$)}=(x_1,\dots,x_n,y_1,\dots,y_m)$ expressing the composition of alleles in each population of organisms can be thought of as encoding a randomized strategy for each gene (the mixed strategy $(x_i,1-x_i)$ for gene $i$ in species $A$).

A more important simplifying assumption is that each genotype produces one of only two possible phenotypes 
(\textit{e.g.}, Rh blood factor $+/-$, long vs.\ short beak, etc.). A phenotype in our context is simply an arbitrary boolean function on the genome; for $s \in \{0,1\}^n$ we let $f(s)\in \{0,1\}$ be the phenotype of organism $A$, and likewise $g(\sigma)$ for $\sigma \in \{0,1\}^m$. The significance of this mapping is that organisms interact only through their phenotype. The payoff for this interaction is given by a 
 utility (or fitness) function $u$({\boldmath $s$},{\boldmath $\sigma$})=$U(f(${\boldmath $s$}),$g(${\boldmath $\sigma$})), where
  % ({\boldmath $s$},{\boldmath $\sigma$}), the strategy profile chosen by the agents in organisms $A, B$  and 
$U$ is a payoff matrix of dimension $2\times2$.
%\footnote{\HL{Does the notation in this footnote get used anywhere?} 
%More generally we can think of the utility function $u$ as parametrized by the functions $f, g, U$ and formally write $u^U_{f,g}$. Most of the time the choice of these functions is clear by context and we omit the extra indices.} 
 When organisms $s,\sigma$ interact, each gene in organism $A$ receives the same utility $u(${\boldmath $s$},{\boldmath $\sigma$}$)$ while each gene in organism $B$ receives the same utility  $-u(${\boldmath $s$},{\boldmath $\sigma$}$)$.
 
 \suppress{  The payoff matrix $U$ is a $2\times2$ matrix.
%Both organisms genotypes functions $f_A, f_B$  are Boolean functions on their respective alleles.
 %
$$
U =
 \begin{pmatrix}
  a & b \\
  c & d
 \end{pmatrix}
 $$
 }
 
 A natural example is as follows: $A$ is a parasite and $B$ is the host. 
 If the outcomes of functions $f, g$ match, \textit{i.e.}, the ``key" of the parasite matches the ``lock" of the host, then the utility of the host is $-1$ and the utility of the parasite is $+1$. Otherwise, the utilities are reversed.  In this case, matrix $U$ is the Matching Pennies payoff matrix.

$$u(\mbox{\boldmath $s,\sigma$}) = \begin{cases} 1, & \mbox{if } f(\mbox{\boldmath $s$}) = g(\mbox{\boldmath $\sigma$}) \\
                                                                        -1, & \mbox{otherwise.}  \end{cases}         
                                                                        \text{ , or equivalently }~~~                                                              
                                                                        U =
 \begin{pmatrix}
  1 & -1 \\
  -1 & 1
 \end{pmatrix}
$$

\noindent
More generally, we allow for the zero-sum game defined by $$
U =
 \begin{pmatrix}
  a & b \\
  c & d
 \end{pmatrix}
 $$ to be any 
% $(U, -U^T)$ that captures the competition to be any 
$2\times2$ zero sum game that which has a unique Nash equilibrium, which is fully mixed.
 %with a unique fully mixed Nash equilibrium \HL{Text in footnote is less ambiguous: ``which has a unique Nash equilibrium, which is fully mixed"}
(i.e., either
$\min(a,d)>\max(b,c)$ or $\max(a,d)<\min(b,c)$, or equivalently, the best response sequence cycles (clockwise/anti-clockwise) along the four outcomes.) \footnote{Otherwise, (given any generic $2\times2$ zero-sum game) this competition is trivial, since it can be solved via iterated elimination of strictly dominated strategies and convergence to equilibrium for the whole system follows from standard arguments. E.g., see~\cite{Weibull}. Our analysis applies for all rescaled $2\times2$ zero-sum that are not dominance solvable~\cite{Hofbauer98}.}
  
\suppress{
\textbf{The Host-Parasite Game.}  We can capture the competition between the host and parasite organism as a game between two teams of agents (teams $A, B$). In this game the genes are the agents. The allowable set of strategies for each agent is exactly the set of possible alleles for the corresponding gene, \textit{i.e.} the set $\{0,1\}$. }

\suppress{
We consider the choice/allele of each agent/gene as an input to an organism specific function $f$ (or $g$ respectively). Let $s_{i}, \sigma_j \in \{0,1\}$ be the strategy/allele chosen by the $i$-th (resp. $j$-th) agent/gene of organism $A$ (resp. $B$).  The mapping from genotypes to phenotypes will be modeled via a  Boolean function $f$ from $\{0,1\}^n\rightarrow\{0,1\}$, (\text{e.g.}, large wings versus small wings). 
 Organism $B$ has a similar function $g$.
 % It suffices to consider $\Omega^A=\Omega^B=\{0,1,\dots, k\}$.
 
 Abstractly, if $A$ is the parasite organism and $B$ is the host, we can think of the genotype of organism $A$ (parasite) encoding a key (hook or docking mechanism) and the genotype of organism $B$ (host)  encoding a lock (docking location).
 }
 \begin{comment}
 \textit{The XOR-XOR game.}  A special class of Boolean host-parasite games is the XOR-XOR game.
 % Each gene of the host/parasite organism has two variants/alleles (\textit{i.e.}, allele $0$ and $1$). In this case, the genotype of an organism $A$ ($B$) is a  Boolean vector of length $n$ ($m$). 
 Both organisms genotypes functions $f, g$ implement the XOR function on their respective alleles. The host (organism $A$) loses if the outputs match (utility $-1$), otherwise it wins (utility $+1$).     
 
$$u_{XOR, XOR}(s^A, s^B) = \begin{cases} -1, & \mbox{if } \mbox{XOR}(s^A) = \mbox{XOR}(s^B) \\
                                                                        +1, & \mbox{otherwise.}  \end{cases}$$

\end{comment}

\suppress{
Understanding this biological Matching Pennies competition between two species with binary phenotypes will be our driving motivation.
An illustrative concrete example would be as follows. Think of a prawn species that can either  have a thin shell and be able to move fast or a hardened shell and move slowly versus a crab species that can either have small or large pincers. The soft shell can be penetrated even by small pincers, but a soft shelled prawn can outrun crabs with large pincers. On the other hand, hardened shells provide protection against individual attacks from crabs with large pincers but leave the slower prawns susceptible to swarming attacks in the presence of many fast crabs with small pincers. No phenotype provides perfect protection or is a perfect form of attack. Critically, the evolutionary pressures are experienced on the level of the genes, which are mixed through sexual reproduction, whilst the mechanisms translating genotype to phenotypes are presumed to be unknown, as is typically the case. What sort of evolutionary phenomena would be expect to see in this type of environment?
}

\paragraph{Replicator Dynamics and Weak Selection in the Evolution of Sexual Species}

If $p$ is the distribution on genomes $s$ of species $A$ and $q$ is the distribution on genomes $\sigma$ of species $B$, write $u_s=\sum_\sigma q(\sigma)u(s,\sigma)$ (the fitness of genome $s$) and $u=\sum_s p(s)u_s$. The replicator dynamics are that
 the rate of change of $p(s)$ is $\dot p(s)=p(s)(u_s-u)$. A few lines of calculation show that 
 the resulting rate of change of $x_{i}$ (the fraction of the population having allele $0$ in gene $i$) is
% (unless  $x_i \in \{0,1\}$ in which case $\dot x_i=0$), 
 \begin{equation}
\label{eq:repli_A}
\dot{x}_i = x_{i}(1-x_{i})\big(u_{i0} - u_{i1}\big) \quad \quad \text{ and } \dot x_i=0 \text{ if } x_i \in \{0,1\}
\end{equation} where 
 $u_{i0}=\frac{\sum_{s:s_i=0} p(s)u_s}{\sum_{s:s_i=0} p(s)}=\frac{\sum_{s:s_i=0} p(s)u_s}{x_i}$ being the fitness of allele $0$ of gene $i$.  Similarly for genes of species $B$,
 \begin{equation}
\label{eq:repli_B}
\dot{y}_j =-y_{j}(1-y_{j})\big(u_{j0} - u_{j1}\big)
\end{equation}

The work of Nagylaki in 1993  focused attention on study of these dynamics in the ``weak selection" model. In a sexual species, \textit{panmictic mating} (mating of individuals selected uniformly and independently) without selection pressures, leads over time to the genome distribution being a product distribution, that is, to $p(s)=\prod_i x_i^{1-s_i} (1-x_i)^s_i$. The weak selection model makes the approximation that selection is slow enough relative to reproduction that the genome may be considered at all times to be in a product distribution, with time-dependent marginals $x_i$ and $y_j$. The prior work~\cite{PNAS2:Chastain16062014,evolfocs14,ITCS15MPP,mehta_et_al:LIPIcs:2016:6407} is entirely within this model and it will be our focus as well. In weak selection the equations~\ref{eq:repli_A},~\ref{eq:repli_B} may be considered a complete description of the process rather than merely summary statistics.

What distinguishes our dynamics from prior work is that 
the fitness of a genotype is no longer a constant but depends on the composition of the population of the other species; the fitness of an allele of a particular gene depends on the compositions of both populations (except in that one gene).

\begin{comment}
The replicator dynamic  enjoys numerous desirable properties
such as universal consistency (no-regret) \cite{Fudenberg98,Hofbauer2009}, connections to disequilibrium thermodynamics and the principle of minimum discrimination information \cite{Karev10}
(Occam's rajor, Bayesian updating \cite{Shalizi09}), %disequilibrium thermodynamics, 
classic models of ecological growth (\textit{e.g.}, Lotka-Volterra equations \cite{Hofbauer98}), as well as  several  discrete time learning algorithms (\textit{e.g.}, Multiplicative Weights algorithm) \cite{Kleinberg09multiplicativeupdates,Arora05themultiplicative,akin}.
\end{comment}

%\textbf{Remark:}
% A fixed point of a flow is a point where the vector field is equal to zero. An interesting observation about the replicator is that its fixed points are exactly the set of randomized strategies
 %such that the agent experiences equal costs across all strategies he chooses with positive probability. 
 %This is a generalization of the notion of Nash equilibrium, since equilibria furthermore require
 %that any strategy that is played with zero probability must have expected cost at least as high as those strategies which are played with positive probability.

\paragraph{Genes$\leftrightarrow$Agents, Species$\leftrightarrow$Team of Agents, Allele$\leftrightarrow$Strategy.}
 In terms of the analysis, it will be helpful  to think of the biological setting in purely game theoretic terms. The immediate game theoretic analogue of this setting is to study competitions between two teams, $A$ and $B$. The first team has $n$ agents, whereas the second has $m$. Each agent has two strategies, strategies $0$ and $1$ and we denote by $x_i\defeq x_{i0}$ the probability with which he chooses strategy $0$. Given a strategy outcome, all choices of agents in team $A$, (resp. $B$) are used as input in the Boolean functions of each team $f$ (resp. $g$) and each team participates with its respective output in the $2\times2$ zero-sum game $U$. All agents in a team enjoy exactly the same utility, \textit{i.e.}, their team's utility in game $U$. We denote by $u_i, u_{i0}, u_{i1}$ respectively the expected utility of agent $i$, the expected utility of agent $i$ given that he chooses $0$ and the expected utility of agent $i$ given that he chooses $1$ (where the randomness is over the product distribution over the mixed strategies of all agents).

\paragraph{Existence and Uniqueness of Global Solution}
The theory of differential equations ensures (for more details see, \textit{e.g.}, Chapter 6 of \cite{Weibull} or \cite{Arnold78}) that replicator dynamics of a multi-player game from initial conditions (i.e., initial probability distributions) $\zz_0$, have a unique global solution 
 $\Phi(\zz_0,\cdot):\Real\rightarrow \prod_i \Delta(S_i)$; furthermore that this solution is smooth as a function of  time and initial conditions.  
 We define a trajectory or orbit through an initial state $\zz_0$ as the image of the whole time axis under the solution mapping $\Phi(\zz_0,\cdot)$: 
 
 %\HL{I know we defined $\Delta(S)$ earlier but do we really need this heavy notation? If so, should use something like $\prod$ rather than $\times$.}
 
\[\text{Traj}(\zz_0)=\{%\zz \in \prod_{i \in A} \Delta(S^A_i)\prod_{j\in B} \Delta(S^B_j): 
\zz=\Phi(\zz_0,t) \text{ for some } t\in \Real\}\]

To ease notation, when keeping track of the initial condition is not critical, we  write {\boldmath $z$}$(t)$ instead of $\Phi(${\boldmath $z_0$}$,t)$. 
% Typically, we will  focus on a specific subset of the system state (strategy profile), \textit{e.g.}, 
We  write {\boldmath $x$}$(t)$, (resp. $\yy(t)$) to denote the current (product) mixed strategy profile of genes in species $A$  (resp. $B$) or $x_i(t)$ (resp.\ $y_j(t)$) to denote the mixed strategy of a specific gene.

%\HL{You'll notice I've repeatedly dropped team/agent in favor of species/gene. I'm just trying to keep everything as concise as possible, and respect that our big selling point is that we're saying something about nature.}

%\subsection{Notation}
%\begin{itemize}
%\item $N$ is the set of strategies, $|N|=n$.
%\item We use letters $i,j$ for strategies.
%\item $x_{i},y_{j}$ the probability first,second player choose $i,j$ respectively.
%\item $\Delta_{2n}$ for the set $\{(\textbf{x},\textbf{y}): \textbf{x}\geq 0, \textbf{y}\geq 0, \sum_{i \in N} x_{i}=1, \sum_{i \in N}y_{i}=1\}$.
%\item $\textbf{p},\textbf{q}$ for fixed points in $\Delta_{2n}$.
%\item $f$ for the vector field of the dynamical system.
%\item $g$ projection (diffeomorphism) to lower dimensional space, $g: \mathbb{R}^{2n} \to \mathbb{R}^{2n-2}$
%\item $z_{\textbf{q}}$ same role as $g$.
%\item $int \; A$ is the interior of set $A$.
%\end{itemize}

\suppress{
Before we continue with the analysis, we should note that this model can be viewed also as the smoothed, continuous time version of the Multiplicative Weights Update (MWU) model of genetic mixing, sexual evolution in haploid species which has well established connections to prior classic models of sexual evolution (see  \cite{Arxiv:DBLP:journals/corr/abs-1208-3160,ITCS:DBLP:dblp_conf/innovations/ChastainLPV13,PNAS2:Chastain16062014,evolfocs14,ITCS15MPP,mehta_et_al:LIPIcs:2016:6407}). Unlike all these previous models, the fitness of a specific host genotype in our model is no longer a constant parameter but depends on the composition of the parasite population (and vice versa).
}

\section{Analysis}
\label{section:analysis}

\subsection{Overview of the Proof}

A critical insight is that instead of tracking the true state of the system {\boldmath $z$}$(t) =(${\boldmath $x$}$(t),${\boldmath $y$}$(t)) %=(\prod_ix_i(t), \prod_j y_j(t))
= (x_1(t), \dots, x_n(t), y_1(t), \dots, y_m(t))$, and directly trying to argue about the system in its native state space, we will focus on the quantities $\text{E}_{\sss \sim \xx} f(\sss)$ and %\text{E}_{\mbox{\boldmath $s \sim  x^A$}} f({\bf s}),
$\text{E}_{\ssigma \sim \yy} g(\ssigma)$, \textit{i.e.}, the expected output of the Boolean functions of both teams of agents. We will denote these quantities as $f,g$.\footnote{The expected output of the Boolean function of team $A$, $f(${\boldmath $x$}$(t))$, is clearly a function of {\boldmath $x$}$(t)$, however, sometimes to simplify notation we will just write $f(t)$ when we wish to focus on the time dependency, or just $f$.} 
As it turns out, it will be convenient to think of the distribution $(f,1-f)$ as encoding a mixed strategy implemented in the $2\times2$ zero-sum game $U$. 

%As we will show, these distributions when updated due to the system dynamics will indeed move in the direction that myopically improves the (expected) utility of each team. 
%More precisely, we will show that the $(\text{E}_{\bf s\sim x^A(t)} f({\bf s}), \text{E}_{\bf s\sim y^B(t)} g({\bf s}))$ is a periodic function and moreover given knowledge of the initial condition, this output suffices to uniquely determine the system state {\boldmath $z$} = (\mbox{\boldmath $x^A ,y^B$}), and thus the system as a whole is periodic.

% where $f, g=\text{E}_{s_{-i}\sim x_{-i}} f(\gamma,s_{-i})$ for $\gamma \in \{0, 1\}$.

%$\text{E}_{s\sim x} f(s)$

In Section \ref{subsection:eq}, we identify and classify the equilibria (fixed points) of the dynamics. These equilibria can be grouped into two  categories. {\bf Nash fixed points} are states in which the expected output of each Boolean function $(f, g)$ encodes the unique (fully mixed) Nash equilibrium of the $2\times2$ zero-sum game $U$.  These are stationary due to the fact that no team can deviate and gain the upper hand on its opposing team. The second type of fixed points, are states where at least one of  the two teams got ``stuck". Their opposing team (\textit{e.g.}, team $B$) is not (necessarily) implementing its minimax strategy but nevertheless no agent of team $A$ can influence  the expected outcome of his team's Boolean function via unilateral deviations. One such example is when team $A$ is implementing a XOR function and at least two agents choose between $0,1$ uniformly at random. We call these fixed points {\bf strange fixed points} as they intuitively correspond to evolutionary flukes, where changes to no single gene can affect the composition of the species. 

In Section \ref{subsection:Independent}, we focus on a single population and show that the corresponding vector field can be expressed as a product of a scalar ``rate" term (that depends on the mixed strategy of the opposing species) and a vector that only depends on the team's own behavior (Proposition \ref{prop:AindB}). Thus, the trajectory that each team traverses depends only the Boolean function that it implements (e.g., $f$) and its own initial condition (e.g., $x_1, x_2,\dots, x_n$). Effectively, this trajectory corresponds to replicator dynamics in a common utility/potential game where the joint utility of each agent in any mixed strategy outcome is the expected output  $f$ of the team's Boolean function. 

This connection to potential games becomes handy for several reasons. First, in Theorem \ref{thm:safe}, we prove that strange fixed points are indeed evolutionary flukes that can be ruled out under typical genericity conditions. Secondly, 
in Proposition \ref{lem:chasing}, we can already establish that $f, g$ %the expected output of the Boolean functions (phenotypes) of the two organisms 
exhibit a natural ``chasing" relationship, which is an early step towards proving periodicity. That is, if we interpret the expected output of each organism as a mixed strategy with which the organism participates in
the zero-sum game then this mixed strategy will move in the direction that would have increased its expected payoff given the mixed strategy of the opposing species. 
This creates a ``chasing" behavior with the directionality of the movement of each mixed strategy (\textit{i.e.}, increasing or decreasing) flipping when the opposing team's strategy transitions through the unique (mixed) Nash equilibrium of the game.

%Connections to potential games and understanding replicator in potential games.

Furthermore, in Section \ref{subsection:Reduction}, we leverage the connections to potential games to formally prove that it suffices to keep track only of the quantities $f,g$.
%, i.e., the expected output of the Boolean functions of both teams of agents. 
 As explained above, these quantities correspond to the potential function
 in a common utility/potential game where the joint utility of each agent in any mixed strategy outcome is the expected output  of each team's Boolean function. 
In such a potential game, along any nontrivial trajectory the common utility/potential is strictly increasing with time and thus given any initial condition $\xx_0$
% {\boldmath $z_0$}
 there exists a bijective function between the time range over which the trajectory is defined, $(-\infty, \infty)$, and the range of potential values over this trajectory. Thus, given the initial condition of the team's behavior
  %{\boldmath $z_0$} 
  and the current output of the team, $\textit{e.g.},$ $f(t)$, the current behavior of each member of the team $x_i(t)$ is uniquely defined. In a sense, each trajectory can be embedded onto a two dimensional system since given the initial conditions of both teams as long as we keep track of each team's expected output $f,g$ we can %(in principle) 
  uniquely identify the exact state of the system. 

In order to prove the system periodicity we need to establish connections to the theory of Hamiltonian systems. In Section \ref{subsection:Hamiltonian}  we  show that the two dimensional system that couples $f, g$ together is effectively a conservative system that preserves an energy-like function. Specifically, up to team specific reparametrizations and change of variables, the dynamical system has the form

%\begin{align*}
\[ 
\left \{
  \begin{tabular}{ccc}
  $\frac{df}{dt}$& = & $g$ \\
$\frac{dg}{dt}$&=&$-f$ 
  \end{tabular}
\right \}
\Leftrightarrow
\left \{
  \begin{tabular}{ccc}
  $\frac{df}{dt}$& = & $\frac{\partial H}{\partial g}$ \\
$\frac{dg}{dt}$&=&$-\frac{\partial H}{\partial f}$ 
  \end{tabular}
\right \}
\]

%&
%\end{align}
%
%\begin{eqnarray*}
%\frac{df}{dt}&=&g ~(= \frac{\partial H}{\partial g}) \\
%\frac{dg}{dt}&=&-f ~(= -\frac{\partial H}{\partial f}) 
%\end{eqnarray*}
%
\noindent
which is a standard Hamiltonian system with Hamiltonian function equal to $H=\frac{f^2+g^2}{2}$. In this parameterization all trajectories are cycles centered at $0$.
In our case,  this conserved quantity, or ``constant of the motion", is  
%non-constructive in the sense that we cannot derive its closed form description.\footnote{If we focus narrowly on specific instances of these system such as XOR-XOR games then we can actually produce closed forms with are connected to information theory and the KL-divergence between the evolving system state and the system's interior fixed point.} 
more elaborate; 
nevertheless, once we establish it exists, we can leverage  standard tools from topology of dynamical systems (Appendix \ref{Section:Background}), and establish that its orbits are periodic (Theorem \ref{thm:periodic}).
Putting everything in this section together, Theorem 1 (more formally given as Theorem \ref{thm:main1}) follows. 

Finally, in Section \ref{section:time_average} we investigate the game-theoretic properties of these periodic orbits. By periodicity, the time-averages of all involved quantities are well defined. In Theorem \ref{thm:correlated} we show that the time-average play over the strategy outcomes of the team zero-sum game is a correlated equilibrium of that game. Unlike zero-sum games, in team zero-sum games \cite{SchV17} (even Boolean team zero-sum games)  their sets of (Nash) equilibria may include outcomes of widely varying utilities for each team. Nevertheless, in Theorem 2 (given more formally as \ref{thm:main2}), we establish that sexual evolution leads to gene coordination at the species level, in a time-averaged sense. Namely, time average of the output of each team $(f,g)$
is equal to the unique fully mixed Nash equilibrium strategy of that team in the $2\times2$ zero-sum game $U$. Furthermore,
 the average utility of each team (and hence all of its members)  is exactly equal to its value in the two species zero-sum competition. 

\subsection{System Description \& Fixed Points: The Nash, the Strange \& the Partial}  
\label{subsection:eq}

\begin{lem} 
\label{lem:sys}
There exists a constant $\alpha \neq 0$ such that the replicator system equations reduce to: 
$$\dot{x}_i=\alpha x_i(1-x_i)(f_{i0}-f_{i1})(g-q) \text{ for all agents $i$ in team } A,$$ 
$$\dot{y}_j=-\alpha y_j(1-y_j)(g_{j0}-g_{j1})(f-p) \text{ for all agents $j$ in team } B$$ 

\noindent
where $(p, 1-p)$, $(q, 1-q)$ the unique fully mixed Nash equilibrium strategy in game $U,-U^T$
and where $f_{i\gamma}=\text{E}_{\sss_{-i}\sim \xx_{-i}} f(\gamma,\sss_{-i})$, %$f_{i1}=\text{E}_{\sss_{-i}\sim \xx_{-i}} f(1,\sss_i)$.
$g_{j\gamma}=\text{E}_{\ssigma_{-j}\sim \yy_{-j}} g(\gamma,\ssigma_{-j})$ for $\gamma \in \{0,1\}$.
\end{lem}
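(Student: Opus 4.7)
The plan is to start from the replicator equation $\dot{x}_i = x_i(1-x_i)(u_{i0}-u_{i1})$ and expand $u_{i\gamma}$ by exploiting the two layers of structure: (i) the utility depends on $(s,\sigma)$ only through $(f(s),g(\sigma))$, and (ii) the population distributions are product distributions, so $s$ and $\sigma$ are independent. Writing the $2\times 2$ payoff $U$ in its multilinear form $U(f(s),g(\sigma)) = U_{00}(1-f(s))(1-g(\sigma)) + U_{01}(1-f(s))g(\sigma) + U_{10}f(s)(1-g(\sigma)) + U_{11}f(s)g(\sigma)$ and taking expectation conditional on $s_i=\gamma$ gives $u_{i\gamma}$ as exactly this same expression with $f(s)$ replaced by $f_{i\gamma}$ and $g(\sigma)$ replaced by $g$, because the product structure lets $\text{E}[f(\gamma,\sss_{-i})g(\ssigma)]$ factor as $f_{i\gamma}\cdot g$.

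Next I would compute the difference $u_{i0}-u_{i1}$. The only terms that survive are those containing $f(s)$, so the whole expression factors as $(f_{i0}-f_{i1})\bigl[(U_{10}-U_{00})(1-g)+(U_{11}-U_{01})g\bigr]$. The bracket is affine in $g$, namely $(U_{10}-U_{00}) + \alpha g$ where $\alpha \defeq U_{00}-U_{01}-U_{10}+U_{11}$. I would then invoke the characterization of the unique fully mixed Nash equilibrium of the $2\times2$ zero-sum game: the column player's equilibrium mixing probability $(q,1-q)$ is the one making the row player indifferent, which forces $q = (U_{00}-U_{10})/\alpha$ and hence $(U_{10}-U_{00})+\alpha g = \alpha(g-q)$. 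Substituting back yields the claimed form for $\dot{x}_i$.

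For team $B$ the derivation is symmetric: condition on $\sigma_j=\delta$, factor out $(g_{j0}-g_{j1})$, and recognize the remaining factor as $\alpha(f-p)$ where $p=(U_{00}-U_{01})/\alpha$ is the row player's unique equilibrium probability; the extra minus sign is the one already present in~\eqref{eq:repli_B}. The key observation to record here is that the same scalar $\alpha$ appears in both dynamical equations because $\alpha$ depends only on $U$, not on which side the derivative is taken. Finally, under the standing hypothesis that $U$ has a unique fully mixed equilibrium we have either $\min(U_{00},U_{11})>\max(U_{01},U_{10})$ or the reverse, and in both cases the four inequalities combine to give $U_{00}+U_{11}\neq U_{01}+U_{10}$, i.e., $\alpha\neq 0$, validating that $p,q\in(0,1)$ are well-defined and that the common prefactor is nonzero.

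There is no real obstacle here; the entire argument is a direct computation. The only place where any care is required is in matching the two sign conventions—the minus sign in~\eqref{eq:repli_B} for team $B$ versus the sign of $\alpha$—to make sure both equations come out with the same $\alpha$ (not $\pm\alpha$), and in mapping the abstract matrix entries to the $(p,q)$ of the Nash equilibrium with the correct 0/1 convention used for alleles.
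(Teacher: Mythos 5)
Your proposal is correct and follows essentially the same route as the paper's proof in Appendix A: expand $u_{i0}-u_{i1}$ using the multilinearity of $U(f,g)$ and the product structure of the population, factor out $(f_{i0}-f_{i1})$, and identify the remaining affine function of $g$ as $\alpha(g-q)$ via the standard formula for the fully mixed equilibrium, with $\alpha=a-b-c+d\neq 0$ guaranteed by the non-degeneracy assumption on $U$. The paper simply carries out the same substitution directly in the $(a,b,c,d)$ coordinates.
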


\begin{comment}
\begin{proof}
By assumption  the $2\times2$ zero sum game  has a unique Nash equilibrium with $p=\frac{d-c}{a-b-c+d}, q=\frac{d-b}{a-b-c+d}$. By substitution in equation \ref{eq:repli_A} we derive:

\begin{eqnarray*}
\dot{x}_i &=& x_{i}(1-x_{i})\big(u_{i0} - u_{i1}\big)\\
              &=& x_{i}(1-x_{i})(f_{i0}(ga+(1-g)b)+(1-f_{i0})(gc+(1-g)d)-\\
              &-&  f_{i1}(ga+(1-g)b)-(1-f_{i1})(gc+(1-g)d))\\
              &=& x_{i}(1-x_{i})( f_{i0}-f_{i1})\big(g(a-b-c+d)-(d-b)\big) \\
              &=& (a-b-c+d)x_{i}(1-x_{i})( f_{i0}-f_{i1})\big(g-\frac{d-b}{a-b-c+d}\big) \\ 
              &=&  \alpha x_i(1-x_i)(f_{i0}-f_{i1})(g-q)
\end{eqnarray*}
\noindent
where $\alpha =  (a-b-c+d)\neq 0$ by assumption. By substitution in equation \ref{eq:repli_B} we derive the analogous system for team $B$. 
\end{proof}
\end{comment}

The proof of Lemma \ref{lem:sys} deferred to Appendix \ref{Appendix:ProofSys}.
 %It suffices to argue periodicity when $\alpha=1$ as
 This reparametrization does not affect the shape of the trajectories. 
 %but merely the speed and the direction (clockwise or counterclockwise) of the motion.  
Hence, we assume wlog  $\alpha$ to be equal to $1$.
\medskip

\noindent
{\bf Structure of fixed points.} %There  exist effectively three different types of fixed points. 
%Since these strategies are time invariant, \textit{i.e.}, they stay equal to zero (or one) for all time, we can disregard these agents and focus on the ones
%that play fully mixed strategies. In other words, we can view any fixed point as an interior fixed point played on the appropriate subgame. 
 Amongst all equilibria with full support, there exist two different types of fixed points.
 %resulting in the second and third type of fixed points overall. 
The first type corresponds to outcomes where $f=p$ and $g=q$, \textit{i.e.}, outcomes in which the expected output of each Boolean function encodes the unique (fully mixed) Nash equilibrium of the $2\times2$ zero-sum game. We call these {\bf Nash fixed points}. The second type of fixed points, we have either  $f_{i1}=f_{i0}$ for all agents $i$ of team $A$, or  $g_{j1}=g_{j0}$ for all agents $j$ of team $B$, or both. We call these fixed points, {\bf strange fixed points}.\footnote{It may be the case that one fixed point satisfies the definition of both Nash fixed point as well as strange fixed point. E.g. when both teams $A,B$ implement the XOR function and play against each other in a Matching Pennies game then the uniformly mixed strategy is both a Nash fixed point and a strange fixed point. This non-exclusivity simplifies the exposition and thus we allow it.}
These are fixed points where at least one of  the two teams got ``stuck". Their opposing team (\textit{e.g.}, team $B$) is not (necessarily) implementing its minimax strategy but nevertheless no agent of team $A$ can influence  the expected outcome of his team's Boolean function via unilateral deviations. One such example is when team $A$ is implementing a XOR function and at least two agents choose between $0,1$ uniformly at random. %This remains the case, if we pad such a function with dummy input variables. 

Finally there exist fixed points in which some agents  are using pure strategies (\textit{e.g.}, $x_i,y_j =0$ or $1$).  
We call these {\bf partial support fixed points}.
We can complete the categorization by defining as \textit{partial support Nash fixed points}, (resp. \textit{partial support strange fixed points})  those partial support fixed points with at least one randomizing agent
  such that when examining the subgame defined by those strategies played with positive probability then they encode  a Nash (resp. strange) fixed point. Fixed points without any randomizing agents are called pure fixed points.

\subsection{The Topology of the Trajectory of Team $A$ is Independent of Team $B$}  
\label{subsection:Independent}

\begin{prop}
\label{prop:AindB}
The trajectory of team $A$ $\{\xx \in  [0,1]^n:
%\times_{i\in A} \Delta(S_i):
(${\boldmath $x,y$}$)=\Phi((\xx_0,\yy_0),t) \text{ for some } t\in \Real\}$
%
%{\boldmath $x^A$}$(t)$ given any initial condition {\boldmath $z_0$}$=(${\boldmath $x^A_0,y^B_0$}$)$
is a subset of the trajectory of system
  $$\dot{x}_i= x_i(1-x_i)(f_{i0}-f_{i1})\text{ for all agents in team } A,$$ 
  
  \noindent
  with initial condition $\xx_0$, which is independent of team $B$. 
  We call this system, team's $A$ subsystem and we denote its solution by $\Phi^A(${\boldmath $x$}$,t)$.
  Moreover, $f$ is a {\bf strict Lyapunov function} in this system. That is, given any initial condition $\mbox{\boldmath $x=x_0$}$ we have  
$\left.
   \frac{df}{dt}
 \right|_{\xx=\xx_0}>0,$
%$$\frac{\partial f}{\partial t}|_{\mbox{\boldmath $x=x_0$}}>0,$$
 unless $\xx_0$  is a fixed point of $\Phi^A$.
\end{prop}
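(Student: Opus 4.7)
The plan is to exploit a structural observation about Lemma~\ref{lem:sys}: the factor $\alpha(g(t)-q)$ on the right-hand side of $\dot{x}_i$ depends on $t$ (through team $B$'s state) but neither on the index $i$ nor on $\xx$. Writing $\phi(t)\defeq \alpha(g(t)-q)$ and $F^A_i(\xx)\defeq x_i(1-x_i)(f_{i0}-f_{i1})$, the full team-$A$ dynamics read $\dot{\xx}=\phi(t)\,F^A(\xx)$, while the proposed subsystem $\Phi^A$ is $\dot{\xx}=F^A(\xx)$. Thus the two systems share the same direction field in phase space and differ only by a common, time-dependent rescaling of speed.

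To formalize the ``same trajectory'' claim I would use a time reparametrization. Set $\tau(t)\defeq \int_0^t \phi(s)\,ds$. A one-line chain-rule calculation shows that $t\mapsto \Phi^A(\xx_0,\tau(t))$ satisfies $\dot{\xx}=\phi(t)\,F^A(\xx)$ with initial condition $\xx_0$; by uniqueness of ODE solutions, $\xx(t)=\Phi^A(\xx_0,\tau(t))$ for all $t\in\Real$, so the image of the team-$A$ trajectory is contained in that of $\Phi^A$. The reason this is an inclusion rather than an equality is precisely that $\phi$ changes sign each time $g$ crosses $q$, so the full system may retrace a portion of the $\Phi^A$-orbit rather than traversing it monotonically.

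For the Lyapunov property I would exploit the multilinearity of $f$ in the marginals $x_i$. Since the genome of species $A$ is drawn from a product distribution with marginals $x_i$ and $f_{i\gamma}=\Exp_{\sss_{-i}\sim\xx_{-i}}f(\gamma,\sss_{-i})$ does not involve $x_i$, we have $f=x_i f_{i0}+(1-x_i)f_{i1}$, whence $\partial f/\partial x_i=f_{i0}-f_{i1}$. Along $\Phi^A$ this gives
\begin{equation*}
\frac{df}{dt}=\sum_i\frac{\partial f}{\partial x_i}\,\dot{x}_i=\sum_i x_i(1-x_i)(f_{i0}-f_{i1})^2\ \ge\ 0,
\end{equation*}
and the sum vanishes iff, for every $i$, either $x_i\in\{0,1\}$ or $f_{i0}=f_{i1}$; in either case $F^A_i(\xx)=0$, so equality characterizes exactly the fixed points of $\Phi^A$. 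The main obstacle is bookkeeping rather than analysis: one must verify carefully that $f_{i\gamma}$ is genuinely independent of $x_i$ (which is what identifies $\Phi^A$ with replicator dynamics on the common-payoff ``team'' game with potential $f$) and must remember that the first step yields only a one-directional inclusion because of the possible sign changes of $\phi$.
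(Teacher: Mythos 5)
Your proposal is correct and follows essentially the same route as the paper: factor out the common scalar $\alpha(g(t)-q)$ so that team $A$'s motion is a (possibly direction-reversing) time-reparametrization of the autonomous subsystem $\dot{x}_i=x_i(1-x_i)(f_{i0}-f_{i1})$, then use multilinearity of $f$ to get $\frac{df}{dt}=\sum_i x_i(1-x_i)(f_{i0}-f_{i1})^2\geq 0$ with equality exactly at fixed points of $\Phi^A$. The only difference is that you make the reparametrization explicit via $\tau(t)=\int_0^t\phi(s)\,ds$ and uniqueness of solutions, where the paper argues informally that the common factor only rescales speed and possibly reverses direction along the same orbit; this is a welcome tightening, not a different argument.
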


\begin{proof}
The multiplicative term $ (g-q)$ is common across all terms of the vector field corresponding to agents in team $A$ in lemma \ref{lem:sys}. Hence, it dictates the magnitude of the vector field (the speed of the motion), but does not affect directionality other than moving backwards or forwards along the same trajectory. 
% So, given any time segment where the sign of  $(g-q)$ is constant, this corresponds to a time reparametrization system of either the system
Specifically, both systems 
$\dot{x}_i= x_i(1-x_i)(f_{i0}-f_{i1})\text{ for all agents in team } A,$ and $\dot{x}_i= -x_i(1-x_i)(f_{i0}-f_{i1})\text{ for all agents in team } A$ have exactly the same orbits (but traverse them in opposite direction). So, 
%we cannot at any point escape the orbit  of subsystem $A$ and
 the trajectory of team $A$ in our original system corresponds to a subset of a specific orbit of subsystem $A$. 
 This specific orbit  starts at the initial condition  $\xx_0$ ($A$-team's  initial condition in the original system).

Moreover, we will show that $f$ is a strict Lyapunov function for this projected system.
Due to the multilinearity of $f$ on $x_i$'s: $f= x_if_{i0}+(1-x_i)f_{i1}$ and therefore  $\frac{\partial f}{\partial x_i}= f_{i0}-f_{i1}$.
Combining this with the definition of the vector field of subsystem $A$ we have that

$$\frac{df}{dt} =\sum_i \frac{\partial f}{\partial x_i}\dot{x_i}= \sum_i  x_i (1-x_i)\big(f_{i0}-f_{i1}\big)^2 $$

\noindent
The summation is clearly nonnegative and it is only equal to zero at the fixed points of $\Phi^A$.
\end{proof}

\noindent
{\bf Connection to potential games.} Team's $A$ subsystem is equivalent to applying replicator dynamics to a partnership game (a game where all $i \in \{1,2,\dots, n\} $ agents receive the same payoff/utility at each (mixed) outcome) where the common utility function at mixed strategy  {\boldmath $x$} is equal to $f(${\boldmath $x$}$)=\text{E}_{\sss \sim \xx}f(s)$.  
%Specifically, the equations of subsystem $A$ correspond exactly to the system we would derive when all agents participate in this partnership game while employing replicator dynamics. 
A partnership game is a potential game with potential function equal to the common utility. The potential is a strictly increasing function along any non-trivial system trajectory and all initial conditions implying convergence to equilibria (see \textit{e.g.}, \cite{Kleinberg09multiplicativeupdates}). We will leverage this connection and our current understanding of replicator dynamics in potential games from  \cite{Kleinberg09multiplicativeupdates} to argue that in system $\Phi$ only a measure zero set of initial conditions may converge to strange fixed points.

\begin{definition}
 We call an initial condition $\xx_0$ of subsystem $\Phi^A$ {\bf safe}, if and only if, the orbit $\Phi^A(\xx_0,\cdot)$ does not converge to a non-pure\footnote{non-pure = a fixed point with at least one randomizing agent.}  %(interior) 
 fixed point for $t \rightarrow \pm \infty$. Analogous definitions apply for subsystem $\Phi^B$ as well.
 We call an initial condition $(\xx_0,\yy_0)$ of system $\Phi$ {\bf safe}, if and only if, both $\xx_0$ and $\yy_0$ are safe in their respective subsystems.
\end{definition}

By proposition \ref{prop:AindB}, given a safe initial condition $(\xx_0,\yy_0)$, its respective orbit $\Phi((\xx_0,\yy_0),\cdot)$ clearly cannot converge to a partial strange fixed point for $t \rightarrow \pm \infty$. Indeed,
 a necessary condition for convergence to a partial strange fixed point in $\Phi$ given initial condition $(\xx_0,\yy_0)$ as  $t \rightarrow \pm \infty$ is that either $\Phi^A(\xx_0,\cdot)$ or  $\Phi^B(\yy_0,,\cdot)$ converge to a non-pure  %(interior)
  fixed point as $t \rightarrow \pm \infty$. 
%That is, if $\xx_0$ and $\yy_0$ are safe for subsystems  $\Phi^A$ and $\Phi^B$   respectively then $(\xx_0,\yy_0)$ is safe for system $\Phi$ as well.

\begin{thm}
\label{thm:safe}
If the fixed points of $\Phi$ are isolated then all but a measure $0$ set of its initial conditions  are safe.
\end{thm}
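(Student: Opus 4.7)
The plan is to reduce the claim about $\Phi$ to separate measure-zero statements for each subsystem $\Phi^A$ and $\Phi^B$, and then invoke a stable-manifold/potential-game argument in each one-species subsystem. By the remark immediately preceding the theorem, an initial condition $(\xx_0,\yy_0)$ is unsafe in $\Phi$ only if either $\xx_0$ is unsafe in $\Phi^A$ or $\yy_0$ is unsafe in $\Phi^B$. Hence the unsafe set is contained in $U_A\times[0,1]^m\,\cup\,[0,1]^n\times U_B$ where $U_A,U_B$ denote the subsystem-unsafe sets. By Fubini it suffices to prove $|U_A|=0$ (and, symmetrically, $|U_B|=0$).

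Next I would show that the non-pure fixed points of $\Phi^A$ form a finite set. The key observation is that if $\xx^*$ is any non-pure fixed point of $\Phi^A$ and $\yy^*\in\{0,1\}^m$ is any pure profile for team $B$, then $(\xx^*,\yy^*)$ is a fixed point of $\Phi$: $\dot{y}_j=-y_j^*(1-y_j^*)(g_{j0}-g_{j1})(f-p)$ vanishes because $y_j^*\in\{0,1\}$, and $\dot{x}_i=0$ because $\xx^*$ kills the $\Phi^A$ vector field. A non-isolated family of non-pure fixed points of $\Phi^A$ would therefore spawn a non-isolated family of fixed points of $\Phi$, contradicting the hypothesis. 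Together with compactness of $[0,1]^n$ this gives a finite set $F_A$ of non-pure fixed points.

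The heart of the proof is the claim that for each $\xx^*\in F_A$ its basin in $\Phi^A$ is Lebesgue null in $[0,1]^n$. By Proposition \ref{prop:AindB}, $\Phi^A$ is replicator dynamics on a partnership game with common payoff equal to the multilinear extension $f$, and $f$ is a strict Lyapunov function. LaSalle's invariance principle combined with the finiteness of $F_A$ implies that every unsafe orbit converges (both in forward and backward time) to some fixed point in $F_A$; so the unsafe set $U_A$ is the finite union of stable manifolds of points of $F_A$ together with their backward analogues (unstable manifolds of other points in $F_A$). By the stable manifold theorem, it is enough to exhibit, at each $\xx^*\in F_A$, at least one eigenvalue of the Jacobian of the $\Phi^A$ vector field with nonzero real part. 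Multilinearity of $f$ forces zero diagonal entries in its Hessian; hence on the interior of any face, a non-degenerate critical point of $f$ is a saddle, yielding a positive eigenvalue. For boundary/partial-support $\xx^*$, the transverse directions corresponding to alleles currently played with probability $0$ or $1$ contribute additional eigenvalues $\pm(f_{i0}-f_{i1})(\xx^*)$, and the isolation hypothesis on fixed points of $\Phi$ precludes simultaneous degeneracy of all of these (by the same pairing with pure $B$-profiles as in Step~2). Each stable manifold therefore has codimension $\geq 1$, so $|U_A|=0$, and symmetrically $|U_B|=0$.

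The main obstacle is the last step, in particular ruling out pathological partial-support fixed points whose linearization has all eigenvalues on the imaginary axis. The trace-zero Hessian of a multilinear potential makes interior saddle behavior automatic, but at boundary fixed points one must argue carefully that the isolation hypothesis imposes enough genericity on the Boolean functions $f,g$ to guarantee at least one truly unstable direction, along the lines of the replicator-in-potential-games analysis of \cite{Kleinberg09multiplicativeupdates}.
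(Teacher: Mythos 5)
Your overall architecture is the same as the paper's: reduce to the two subsystems, use the potential/partnership structure of $\Phi^A$, and show that the basins (forward and backward) of non-pure fixed points are null. Your Step 2 --- pairing a non-pure fixed point $\xx^*$ of $\Phi^A$ with a pure profile $\yy^*$ to transfer the isolation hypothesis from $\Phi$ down to $\Phi^A$ and conclude finiteness of $F_A$ --- is correct and in fact makes explicit a bridge that the paper's own proof leaves implicit (its Lemma~\ref{lem:weakly} simply hypothesizes that the subsystem fixed point is isolated). Where you diverge is the measure-zero step: the paper cites the weakly-stable-Nash characterization of \cite{Kleinberg09multiplicativeupdates} (Theorem~\ref{thm:Klein}, applied once to the maximization game for $t\to+\infty$ and once to the cost game for $t\to-\infty$) together with Lemma~\ref{lem:weakly}, whereas you attempt to re-derive the underlying linearization argument directly. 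That re-derivation is sound as far as it goes: at a fixed point the Jacobian restricted to the randomizing coordinates is $D H$ with $D$ positive diagonal and $H$ the (zero-diagonal, hence trace-zero) Hessian of the multilinear $f$; this matrix is similar to the symmetric $D^{1/2}HD^{1/2}$, so if it is nonzero it has both a positive and a negative real eigenvalue, killing both the forward and backward basins via the center-stable manifold theorem.

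The genuine gap is the fully degenerate case, which you flag as ``the main obstacle'' but do not close, and your one attempt to dispatch it is not valid: you assert that ``the isolation hypothesis on fixed points of $\Phi$ precludes simultaneous degeneracy\dots by the same pairing with pure $B$-profiles as in Step~2.'' The pairing argument only shows that non-isolation in $\Phi^A$ implies non-isolation in $\Phi$; it says nothing about why vanishing of the entire randomizing block of the Hessian would force non-isolation in $\Phi^A$ in the first place. That implication is a separate, nontrivial fact about multilinear potentials, and it is exactly the content of the paper's Lemma~\ref{lem:weakly}: if every mixed second difference $f_{i\gamma i'\gamma'}$ over randomizing agents equals the common value $f$ (equivalently, the randomizing block of the Hessian vanishes together with the first-order fixed-point conditions), then the whole segment $(\delta,\xx_{-i})$, $\delta\in[0,1]$, consists of fixed points of $\Phi^A$, contradicting isolation. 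Without this construction, a non-pure fixed point whose linearization is identically zero on the randomizing block (and stable or neutral in the pure directions) is not excluded, and its basin could a priori have positive measure --- which is precisely the failure mode your stable-manifold argument cannot rule out on its own. Supplying that lemma (or citing Theorem~\ref{thm:Klein} as the paper does) is what is needed to complete your proof.
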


\begin{comment}
{\bf Sketch}
%By proposition \ref{prop:AindB}, a necessary condition for convergence to a partial strange fixed point in $\Phi$ given initial condition $(\xx_0,\yy_0)$ as  $t \rightarrow \pm \infty$ is that either $\Phi^A(\xx_0,\cdot)$ or  $\Phi^B(\yy_0,,\cdot)$ converge to a non-pure\footnote{A fixed point with at least one randomizing agent.} fixed point as $t \rightarrow \pm \infty$. 
By definition, if $\xx_0$ and $\yy_0$ are safe for subsystems  $\Phi^A$ and $\Phi^B$   respectively then $(\xx_0,\yy_0)$ is safe for system $\Phi$ as well.
However,  subsystems $\Phi^A, \Phi^B$  correspond to replicator dynamics being  applied to a potential game. By applying a game theoretic characterization of stable equilibria in potential games for replicator dynamics developed in  \cite{Kleinberg09multiplicativeupdates} we can show that replicator dynamics does not converge to such randomized states for all but a zero measure of initial conditions in each subsystem. Thus, all but a measure 0 set of  initial conditions are safe in $\Phi$ as well.
The complete proof is deferred to appendix \ref{Appendix:ProofSafe}. %\qed
\end{comment}

The proof of theorem \ref{thm:safe} is deferred to appendix \ref{Appendix:ProofSafe}.
At this point we can show that the expected outputs $f,g$ of the two teams when viewed as mixed strategies (\textit{i.e.}, probability distributions) $(f,1-f)$, $(g,1-g)$ in the $2\times2$ zero-sum game are being updated in a ``rational" way. Specifically, when they are updated according to the system equations they will develop a ``chasing" behavior where each mixed strategy will move towards the direction that myopically increases its expected payoff in the zero-sum game. This statement in itself does not suffice to argue periodicity  as one can easily create trajectories (not of our dynamics of course) that
 spiral towards the fully mixed Nash equilibrium or diverge to the boundary while displaying this chasing behavior.

\begin{prop}
\label{lem:chasing}
Unless no agent of team $A$ can influence the output of $f$ via unilateral deviations, the expected output of team's $A$ Boolean function, $f$, will increase (decrease) if and only if the output $g$ of team $B$ is larger (smaller) than $q$ ($q=$ the probability of choosing the first action in its unique Nash equilibrium of the zero-sum game). Similarly, $g$ will decrease (increase) when the output $f$ of team $A$ is larger (smaller) than $p$ ($p=$ the probability of choosing the first action in its unique Nash equilibrium of the zero-sum game).
%exhibit a chasing behavior.
\end{prop}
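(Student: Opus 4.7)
The plan is to compute $\tfrac{df}{dt}$ directly along the full dynamics $\Phi$ and show that it factors cleanly as $(g-q)$ times a nonnegative scalar whose vanishing corresponds exactly to the excluded case. The analogous computation for $\tfrac{dg}{dt}$ will yield the statement about team $B$, with the sign reversed because of the $-\alpha$ prefactor in Lemma \ref{lem:sys}.

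First, I would invoke the multilinearity of $f$ in the variables $x_1,\ldots,x_n$, which is immediate because the product distribution representation gives
\[
f(\xx) = x_i f_{i0} + (1-x_i) f_{i1},
\]
so $\partial f/\partial x_i = f_{i0} - f_{i1}$, exactly as in the proof of Proposition \ref{prop:AindB}. Plugging the vector field from Lemma \ref{lem:sys} into the chain rule yields
\[
\frac{df}{dt} \;=\; \sum_{i=1}^{n} \frac{\partial f}{\partial x_i}\,\dot{x}_i \;=\; (g-q)\sum_{i=1}^{n} x_i(1-x_i)\bigl(f_{i0}-f_{i1}\bigr)^2,
\]
where I have pulled the common factor $(g-q)$ (and the absorbed constant $\alpha=1$) outside the sum.

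Next, I would observe that the sum on the right is a sum of nonnegative terms, so its sign is identically zero or strictly positive. It vanishes precisely when every term vanishes, i.e.\ when for every $i$ either $x_i\in\{0,1\}$ (agent $i$ is pure and cannot move $f$ infinitesimally) or $f_{i0}=f_{i1}$ (agent $i$'s allele is irrelevant to $f$ given the other agents' current mixed strategies). This is exactly the condition that no agent of team $A$ can influence the output of $f$ via a unilateral deviation; equivalently, $\xx$ is a fixed point of the subsystem $\Phi^A$. Outside this excluded case the scalar factor is strictly positive, hence $\operatorname{sign}(\tfrac{df}{dt}) = \operatorname{sign}(g-q)$, which is exactly the claimed ``chasing'' behaviour: $f$ increases iff $g>q$ and decreases iff $g<q$.

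Finally, the symmetric claim for team $B$ follows by the same calculation applied to $g(\yy)$ using $\partial g/\partial y_j = g_{j0}-g_{j1}$, together with the extra minus sign in the vector field of Lemma \ref{lem:sys}:
\[
\frac{dg}{dt} \;=\; -\,(f-p)\sum_{j=1}^{m} y_j(1-y_j)\bigl(g_{j0}-g_{j1}\bigr)^2,
\]
so $g$ decreases iff $f>p$ and increases iff $f<p$, again provided some agent of team $B$ can influence $g$ via a unilateral deviation. There is no real obstacle here, since the entire argument is a one-line chain-rule calculation; the only point requiring a moment of care is the precise identification of the ``unless'' clause with vanishing of the nonnegative scalar factor, which is exactly the characterisation of fixed points of $\Phi^A$ (respectively $\Phi^B$) used throughout Section \ref{subsection:Independent}.
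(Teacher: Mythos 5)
Your proposal is correct and follows essentially the same route as the paper's own proof: apply the chain rule using the multilinearity identity $\partial f/\partial x_i = f_{i0}-f_{i1}$, substitute the vector field from Lemma \ref{lem:sys}, factor out $(g-q)$, and note that the remaining sum of squares is nonnegative and vanishes exactly at the fixed points of subsystem $A$ (the ``unless'' clause). The only difference is cosmetic: you spell out the symmetric computation for $g$ and the sign convention $f = x_i f_{i0} + (1-x_i) f_{i1}$ versus the paper's $f = x_i f_{i1} + (1-x_i) f_{i0}$, which is immaterial since the difference is squared.
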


\begin{proof}
\begin{comment}
The replicator system for two Boolean functions competing against each other in Matching Pennies is as follows:

$$\dot{x_i}=  x_i(1-x_i)(f_{i1}-f_{i0})(g-q)$$ 

\noindent
where $f(1,-i)$ is the expected output of my team when agent $i$ plays $1$ over the expected behavior of his teammates. $f(0,-i)$ the analogous term when $i$ plays $0$. Finally, $g$ is the expected output of the opposing team. If we denote as $F,G$ the random variables expressing the output of team $1, 2$, we have that $f(1,-i)=\E_{p_{-i} \sim x_{_i}}(G(1, p_{-i}))$, $f(0,-i)=\E_{p_{-i} \sim x_{_i}}(G(0, p_{-i}))$ , $g=\E_{p \sim y}(G(p))$.
%XXX: sim used to by namable

$$\frac{\partial f}{\partial t} =\sum_i \frac{\partial f}{\partial x_i}\dot{x_i} $$
\end{comment}

Due to the multilinearity of $f$ on $x_i$'s: $f= x_if_{i1}+(1-x_i)f_{i0}$ and therefore  $\frac{\partial f}{\partial x_i}= f_{i1}-f_{i0}$.
Combining this with the vector field form presented in lemma \ref{lem:sys} we have that

$$\frac{df}{dt} =\sum_i \frac{\partial f}{\partial x_i}\dot{x_i}=  (g-q)\sum_i x_i (1-x_i)\big(f_{i1}-f_{i0}\big)^2$$

\noindent
The summation is clearly nonnegative.
%It expresses exactly the variance of the output of the Boolean function implemented by team $A$. NO
 In fact, it is only equal to zero at the fixed points of team's $A$ subsystem. % the equations corresponding to team $A$ agents.
%So, unless no agent can change the output of $f$ by unilateral deviations, the  expected output of team's $A$ Boolean function, $f$, will increase (decrease) if and only if the output $g$ of team $B$ is smaller (larger) than $p$ the probability of choosing the first action in its unique Nash equilibrium  of the zero-sum game.
\end{proof}

From this point forward we will  focus on {safe} initial conditions. This is a full measure set within the set of all initial conditions.
We will prove that any such  state is periodic, \textit{i.e.}, it lies on a closed orbit by establishing connections to planar Hamiltonian systems.  

\subsection{Reduction to $2$-Dimensional Systems via Competing Lyapunov Functions}  
\label{subsection:Reduction}

The next proposition states that knowledge of the initial  condition as well as of the evolving values of $f,g$ suffices (in principle) to recover the complete system state at any time $t$.
% The expected output of the Boolean function of team $A$ $f(${\boldmath $x$}$(t))$ is a function of {\boldmath $x$}$(t)$, however, sometimes to simplify notation we will just write $f(t)$ when we wish to focus on the time dependency, or just $f$.

\begin{prop}
\label{prop:1}
Given a safe initial condition $\zz_0=(\xx_0,\yy_0)$, of system $\Phi$ as well that the values  $f(t),g(t)$, 
 there exist smooth functions $X^{\xx_0}_i,Y^{\yy_0}_j: [0,1] \rightarrow [0,1]$ such that $x_i(t)= X^{\xx_0}_i(f(t))$ $($resp.  $y_j(t)= Y^{\yy_0}_j(g(t))$$)$ for all $t\in \Real$
 and for each agent $i$ of team $A$ $($resp. for each agent $j$ of team $B)$.
\end{prop}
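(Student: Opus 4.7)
The plan is to exploit the decoupling established in Proposition \ref{prop:AindB}: the full trajectory of the $\xx$-coordinates under $\Phi$ traces the \emph{same} one-dimensional orbit as under the autonomous subsystem $\Phi^A$, differing only by a (possibly sign-changing) time reparametrization through the scalar factor $(g(t)-q)$. So it suffices to show that along any non-trivial orbit of $\Phi^A$ starting from $\xx_0$, each coordinate $x_i$ is a smooth function of the value of $f$, and then pull this back through the reparametrization to $\Phi$.

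Concretely, I would proceed as follows. Assume first $\xx_0$ is not a fixed point of $\Phi^A$ (otherwise $\xx(t)\equiv \xx_0$ and we can take each $X^{\xx_0}_i$ to be the constant $x_{i,0}$). By Proposition \ref{prop:AindB}, $f$ is a strict Lyapunov function for $\Phi^A$, so $t\mapsto f(\Phi^A(\xx_0,t))$ is smooth with strictly positive derivative on all of $\Real$, hence a smooth diffeomorphism from $\Real$ onto some open interval $I_{\xx_0}\subseteq [0,1]$. Let $T^{\xx_0}:I_{\xx_0}\to\Real$ denote its smooth inverse. Define, for $\phi\in I_{\xx_0}$,
\[
X^{\xx_0}_i(\phi)\;\defeq\; \bigl(\Phi^A(\xx_0,T^{\xx_0}(\phi))\bigr)_i,
\]
which is smooth as a composition of smooth maps. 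The safety of $\xx_0$ guarantees that the limits $\lim_{t\to\pm\infty}\Phi^A(\xx_0,t)$ are pure (all coordinates in $\{0,1\}$), so $X^{\xx_0}_i$ extends continuously to the closure $\overline{I_{\xx_0}}$ with boundary values in $\{0,1\}$; we then extend $X^{\xx_0}_i$ to all of $[0,1]$ in any smooth way (e.g.\ keeping it constant outside $\overline{I_{\xx_0}}$), which will not affect the identity on-orbit.

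To transfer this to $\Phi$, note that the $\xx$-component of $\Phi((\xx_0,\yy_0),t)$ lies, by Proposition \ref{prop:AindB}, on the orbit $\{\Phi^A(\xx_0,s):s\in\Real\}$. Hence for every $t$ there is $s(t)\in\Real$ with $\xx(t)=\Phi^A(\xx_0,s(t))$, and in particular $f(t) = f\bigl(\Phi^A(\xx_0,s(t))\bigr) \in I_{\xx_0}$. Applying $T^{\xx_0}$ yields $s(t)=T^{\xx_0}(f(t))$, so
\[
x_i(t)\;=\;\bigl(\Phi^A(\xx_0,T^{\xx_0}(f(t)))\bigr)_i\;=\;X^{\xx_0}_i(f(t)),
\]
as required. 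The argument for team $B$ is symmetric, using that $g$ is a strict Lyapunov function for the subsystem $\Phi^B$ (which follows by the same multilinearity calculation as in Proposition \ref{prop:AindB}, with the sign flip in \eqref{eq:repli_B} absorbed into the direction of motion along the orbit).

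The main technical point to watch is that $f(t)$ on $\Phi$ need not be monotone in $t$ (it oscillates, since $(g-q)$ changes sign); what saves us is that \emph{on the orbit of $\Phi^A$} the map $\xx\mapsto f(\xx)$ is one-to-one, so knowing the value of $f$ at time $t$ together with the initial condition pinpoints the location on the orbit unambiguously, regardless of how many times the orbit has been traversed back and forth. The principal subtlety, which the ``safe'' hypothesis is designed to eliminate, is the possibility that $\Phi^A(\xx_0,\cdot)$ accumulates at an interior (non-pure) fixed point, which would make $I_{\xx_0}$'s endpoints correspond to mixed states and break the clean boundary extension; ruling this out is exactly what Theorem \ref{thm:safe} provides for all but a measure-zero set of initial conditions.
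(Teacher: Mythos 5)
Your proposal is correct and follows essentially the same route as the paper's proof: both reduce to the subsystem $\Phi^A$ via Proposition \ref{prop:AindB}, use that $f$ is a strict Lyapunov function there to invert $t\mapsto f(\Phi^A(\xx_0,t))$ (the paper via the inverse function theorem, with safety forcing the range to be $(0,1)$), and define $X^{\xx_0}_i$ as the composition of the flow with this inverse before extending to $[0,1]$. Your explicit remark that non-monotonicity of $f(t)$ under $\Phi$ is harmless because $f$ is injective on the $\Phi^A$-orbit is a nice clarification of a point the paper leaves implicit, but it is not a different argument.
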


\begin{proof}
If $\xx(t)$ (resp. $\yy(t)$) is time invariant,  then the problem is trivial. Suppose not.
We will argue that given an initial condition  for every agent $i$ of team $A$, \textit{i.e.}, $\xx_0$, its  mixed strategy at time $t$, as captured by $x_i(t)$ is uniquely defined given $f(t)$. We know that the curve traced by the agents of the first team {\boldmath $x$}$(t)$  is defined by their initial conditions and the function that they implement. Specifically, it is  contained in the trajectory of team's $A$ subsystem.
So, as long as we can uniquely pinpoint a state $\xx$ on subsystem's $A$ trajectory $\Phi^A(\xx_0,\cdot)$,  given $\xx_0$
 and $f(t)$, then  this must correspond to  team's  $A$ state in the complete system $\Phi((\xx_0,\yy_0),\cdot)$.
% $\dot{x_i}= x_i(1-x_i)(f(1,-i)-f(0,-1))$ for all $i$, since the time reparametrization  $(g(t)-q)$ does not affect the topology of system trajectories just the speed with which they are being traversed. 
 Moreover, in subsystem $A$, $\frac{df}{dt}>0$ unless we are at a fixed point. However, since $\xx(t)$ is not time-invariant, $\xx_0$ is not a fixed point of subsystem $A$.
Finally, by the uniqueness of the system solutions, we cannot reach a fixed point in finite time, and hence $\frac{df}{dt}>0$ for all times $t\in \Real.$ So $f$ as a function of time in subsystem $A$ is always increasing, and it is smooth since $f(\Phi^A(\xx_0,\cdot))$ is a composition of smooth functions. Thus, by the inverse function theorem (see Appendix \ref{Section:Background}) $f^{-1}_{\xx_0}$ exists\footnote{The inverse function of $f$ is effectively parametrized by $\xx_0$ and that's why we write $f^{-1}_{\xx_0}$.}, is smooth,  and  
 is strictly increasing. Thus, it is bijective between its domain, $\Big(\lim_{t\rightarrow -\infty} f(\Phi^A(\xx_0,t)), \lim_{t\rightarrow +\infty} f(\Phi^A(\xx_0,t)) \Big)=\footnote{Since $(\xx_0,\yy_0)$ is safe, in subsystem $A$ the only possible asymptotes for $f$ are $0$ and $1$.}(0,1)$, and $\Real$ and given an input $v=f(\Phi^A(\xx_0,t))$ in its domain
$f^{-1}_{\xx_0}$ returns $t$, the unique time instance at which the Lyapunov function in subsystem $A$ attains value $v$ given initial condition $\xx_0$. Thus, $f^{-1}_{\xx_0}$  is well defined given $\xx_0$ alone. 
We define as $Proj_i$ the projection function that given a vector returns its $i$-th element, \textit{i.e.}, $Proj_i(\xx)=x_i$.
Putting everything together, $X^{\xx_0}_i\defeq Proj_i\Big(\Phi^A\Big(\xx_0,f^{-1}_{\xx_0}(\cdot)\Big)\Big)$ indeeds recovers the accurate state of agent $i$ in team $A$ given the current value of $f$ in system $\Phi$, \textit{i.e.}, $f\big(\Phi\big((\xx_0,\yy_0),t\big)\big)$, since it lies in $\Big(\lim_{t\rightarrow -\infty} f(\Phi^A(\xx_0,t)), \lim_{t\rightarrow +\infty} f(\Phi^A(\xx_0,t)) \Big)=(0,1)$. Finally, we continuously extend  $X^{\xx_0}_i$ to $[0,1]$.
\end{proof}

%It should be clear that $X\big(\mbox{\boldmath $x$}_0,f(\Phi^A($$\mbox{\boldmath $x$}_0$$,t))\big)=\Phi^A($$\mbox{\boldmath $x$}_0$$,t)$ for all $t\in \Real$ since the potential $f$ is strictly increasing over time. Therefore,  $X\big(\mbox{\boldmath $x$}_0,f(\Phi($$\mbox{\boldmath $z$}_0$$,t))\big)=\Phi^A($$\mbox{\boldmath $x$}_0$$,t)$ for all $t\in \Real$ since the potential $f$ is strictly increasing over time. 

We return to the study of the two team system % where  each team implements  a Boolean function.
and argue about the periodicity of its orbits $(f(t),g(t))$. Due to the existence of functions $X^{\xx_0}_i,Y^{\yy_0}_j$, mapping  $(f(t),g(t))$ to a unique $(\mbox{\boldmath $x$}(t),\mbox{\boldmath $y$}(t))$ the periodicity of $(f(t),g(t))$ extends to the system trajectories of $\Phi$.
To simplify notation will we write from now on $X_i,Y_j$ instead of $X^{\xx_0}_i,Y^{\yy_0}_j$ but the dependency on $\xx_0$, $\yy_0$ should be kept in mind. 

\begin{definition}{\bf $(p,q,r,w)$-planar dynamical system:} We define the following class of planar dynamical systems on $[0,1]^2$ parametrized by a point $(p,q) \in (0,1)^2$ and two smooth functions $r,w$ defined on $[0,1]$ with  $r(0)=r(1)=w(0)=w(1)=0$ that are %nonnegative
 strictly positive in $(0,1)$. 
Given such $p,q,r,w$ we define a $(p,q,r,w)$-planar dynamical system as follows:
   \begin{eqnarray*}
   \frac{d\xi}{dt}&=&r(\xi)(\zeta - q)\\
   \frac{d\zeta}{dt}&=&-w(\zeta)(\xi - p)
   \end{eqnarray*}
   % If $r,w$ are strictly positive in  $(0,1)^2$ we call the system a strict $(p,q,r,w)$-planar dynamical system. 
\end{definition}

%For XOR-XOR dynamical system $r(1/2)$ can be equal to some trajectories. TO THINK...

The existence, uniqueness and smoothness of global solutions 
in the case of  $(p,q,r,w)$-planar dynamical systems
follows from standard arguments, since the compact region $[0,1]^2$ is invariant 
and the vector field is smooth. (see \textit{e.g.}, $\cite{Arnold78}$)
%Intuitively, we want to be the case that if the output of the   

\begin{lem}
\label{lem:safe_planar}
%If the Boolean functions $f,g$ are not the XOR function with some possibly dummy input variables, then  
 Given a safe  initial condition,  $(\xx_0,\yy_0)$ of $\Phi$ there exists a %strict
$(p,q,r,w)$-planar dynamical system
such that if $(\xi_0,\zeta_0)=\big(f(\xx_0),g(\yy_0)\big)$ then $(\xi(t),\zeta(t))=\big(f($\mbox{\boldmath $x$}$(t))$,
$g($\mbox{\boldmath $y$}$(t))\big)$ for all $t\in \Real$.  
%If the original system $\Phi$ does not admit (partial support) strange fixed points then the $(p,q,r,w)$-planar dynamical system is strict.
\end{lem}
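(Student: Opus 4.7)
The plan is to compute $df/dt$ and $dg/dt$ along a trajectory of $\Phi$ starting at $(\xx_0,\yy_0)$, factor them into the form required by a $(p,q,r,w)$-planar system using Lemma~\ref{lem:sys} plus multilinearity, reparametrize via Proposition~\ref{prop:1}, and invoke uniqueness of ODE solutions to identify the two dynamics.

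By multilinearity of $f$ in each $x_i$ we have $\partial f/\partial x_i = f_{i0}-f_{i1}$; combining with Lemma~\ref{lem:sys} (with $\alpha$ normalized to $1$) and the chain rule,
\[
\frac{df}{dt} \;=\; \sum_i (f_{i0}-f_{i1})\,\dot{x}_i \;=\; (g-q)\,R(\xx),\quad R(\xx):=\sum_i x_i(1-x_i)(f_{i0}-f_{i1})^2,
\]
and symmetrically $dg/dt = -(f-p)\,W(\yy)$ with $W(\yy):=\sum_j y_j(1-y_j)(g_{j0}-g_{j1})^2$. The factors $R,W$ are nonnegative, and critically $R$ depends only on $\xx$ while $W$ depends only on $\yy$. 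Proposition~\ref{prop:1} now lets us write $\xx(t)=X^{\xx_0}(f(t))$ and $\yy(t)=Y^{\yy_0}(g(t))$ along the trajectory, so setting
\[
r(\xi) := R\bigl(X^{\xx_0}(\xi)\bigr), \qquad w(\zeta) := W\bigl(Y^{\yy_0}(\zeta)\bigr),
\]
the pair $\bigl(\tilde\xi(t),\tilde\zeta(t)\bigr) := \bigl(f(\xx(t)), g(\yy(t))\bigr)$ satisfies exactly the $(p,q,r,w)$-planar ODE with initial data $(\xi_0,\zeta_0)=(f(\xx_0),g(\yy_0))$. Uniqueness of the global solution of the planar system (smooth vector field on the compact invariant $[0,1]^2$) then forces $(\xi(t),\zeta(t))=(\tilde\xi(t),\tilde\zeta(t))$ for all $t \in \Real$.

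It remains to verify that $r,w$ meet the definitional hypotheses: smoothness on $[0,1]$, strict positivity on $(0,1)$, and vanishing at $0$ and $1$. Smoothness is immediate from smoothness of $X^{\xx_0},Y^{\yy_0}$ (Proposition~\ref{prop:1}) composed with the polynomial maps $R,W$. For strict positivity I will observe that $R$ is precisely $df/dt$ evaluated along subsystem $\Phi^A$, which by Proposition~\ref{prop:AindB} is strictly positive off the fixed points of $\Phi^A$; hence for any $\xi\in(0,1)$ the point $X^{\xx_0}(\xi)$ is a non-stationary interior point of the subsystem-$A$ orbit, giving $r(\xi)>0$. The main obstacle is the boundary behavior: I need to show that $r$ vanishes at the endpoints. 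Here the safety hypothesis is essential---it forces the $\alpha$- and $\omega$-limit sets of the subsystem-$A$ orbit to consist only of \emph{pure} fixed points, at which every summand of $R$ vanishes since $x_i(1-x_i)=0$ whenever $x_i\in\{0,1\}$; combined with the continuous extension of $X^{\xx_0}$ to the endpoints this yields $r(0)=r(1)=0$. The argument for $w$ is symmetric, using subsystem $\Phi^B$ and $Y^{\yy_0}$.
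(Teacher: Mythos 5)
Your proposal is correct and follows essentially the same route as the paper: compute $df/dt$, $dg/dt$ via multilinearity and Lemma~\ref{lem:sys}, pull the rate factors back through the maps $X^{\xx_0},Y^{\yy_0}$ of Proposition~\ref{prop:1} to define $r,w$, and use safety to rule out non-pure fixed points of the subsystems so that $r,w>0$ on $(0,1)$. The only local difference is at the endpoints: the paper shows $r(0)=r(1)=0$ directly from the Boolean/product-distribution structure (if $\text{E}f\in\{0,1\}$ every summand vanishes), whereas you deduce it from safety forcing the subsystem limit points to be pure, where $x_i(1-x_i)=0$; both arguments are valid.
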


\begin{proof}
For our two team systems we have  that 

\begin{eqnarray*}
\lefteqn{\frac{df}{dt} =\sum_i \frac{\partial f}{\partial x_i}\dot{x_i}= (g-q)\sum_i x_i (1-x_i)\big(f_{i0}-f_{i1}\big)^2}\\
                                      &=&  (g-q) 
                                       \underbrace{
                                      \sum_i X_i\big(f(t)\big)
                                       \Big(1-X_i\big(f(t)\big)\Big)\Big(
                                      %f(1,X_{-i}(f(t)))-f(0,X_{-i}(f(t)))
                                      \text{E}_{\sss_{-i}\sim X_{-i}(f(t))} f(0,\sss_{-i})- \text{E}_{\sss_{-i}\sim X_{-i}(f(t))} f(1,\sss_{-i})
                                      \Big)^2}_\textit{r(f)}
\end{eqnarray*}

\noindent
where   $r(f)=\sum_i X_i\big(f(t)\big) \Big(1-X_i\big(f(t)\big)\Big)\Big(
                                      %f(1,X_{-i}(f(t)))-f(0,X_{-i}(f(t)))
                                      \text{E}_{\sss_{-i}\sim X_{-i}(f(t))} f(0,\sss_{-i})- \text{E}_{\sss_{-i}\sim X_{-i}(f(t))} f(1,\sss_{-i})
                                      \Big)^2$ is a smooth  function and is clearly nonnegative since $X_i\big(f(t)\big)\in [0,1]$. 
                                      If $f(t)=0$, \textit{i.e.}, $\text{E}_{{\sss\sim \xx}} f(s)=0$, since $\xx$ corresponds to a product distribution
                                      and $f$ is a Boolean function we have that 
                                      for each $i$ either $x_i=0$ or $x_i=1$, or  the value of $f$ over all outcomes in the support 
                                      of $x$ are equal to $0$. Hence,  $\text{E}_{\sss_{-i}\sim X_{-i}(f(t))} f(1,\sss_{-i})=\text{E}_{\sss_{-i}\sim X_{-i}(f(t))} f(0,s_{-i})=0$.
                                      Thus, in all cases if $f(t)=0$ then $r(f)=\sum_i X_i\big(f(t)\big) \Big(1-X_i\big(f(t)\big)\Big)\Big(
                                      %f(1,X_{-i}(f(t)))-f(0,X_{-i}(f(t)))
                                      \text{E}_{\sss_{-i}\sim X_{-i}(f(t))} f(0,\sss_{-i})- \text{E}_{\sss_{-i}\sim X_{-i}(f(t))} f(1,\sss_{-i})
                                      \Big)^2=0$. A similar argument can be applied if $f(t)=1$. 
                                      %Thus, $r(f)>0$ unless we are at a ``strange" fixed, a fixed point of subsystem $A$.                               
                                      Finally, we will argue that since $\xx_0$ is safe then $r(f)$ is strictly positive for $f\in (0,1)$. 
                                      %$(p,q,r,w)$-planar dynamical system is strict. 
                                      It suffices to show that %in this case 
                                      for any $f(t)\in (0,1)$,  $\sum_i X_i\big(f(t)\big) \Big(1-X_i\big(f(t)\big)\Big)\Big(
                                      %f(1,X_{-i}(f(t)))-f(0,X_{-i}(f(t)))
                                      \text{E}_{\sss_{-i}\sim X_{-i}(f(t))} f(0,\sss_{-i})- \text{E}_{\sss_{-i}\sim X_{-i}(f(t))} f(1,\sss_{-i})
                                      \Big)^2 >0$. Since $f(t)\in (0,1)$ there must exist some randomizing agents in distribution {\boldmath $x$}. Amongst these agents, there must exist an agent $i$, 
                                      $
                                      %f(1,X_{-i}(f(t)))-f(0,X_{-i}(f(t)))
                                      \text{E}_{\sss_{-i}\sim X_{-i}(f(t))} f(0,\sss_{-i})- \text{E}_{\sss_{-i}\sim X_{-i}(f(t))} f(1,\sss_{-i})
                                      \neq 0$, since otherwise the state where each agent $i$ of team $A$ plays $X_{i}(f(t))$ would be a non-pure %(interior)
                                       fixed point of subsystem $A$ contradicting the assumption that the initial condition $(\xx_0,\yy_0)$ is safe.        
                                      The argument for $g$, \text{i.e.}, team $B$ follows along the same lines as for team $A$.                               
                                      \end{proof}
%
\begin{comment}
\begin{prop} Given any initial condition,  $(\mbox{\boldmath $x_0, y_0$})$ 
unless $f,g$ are the XOR function with some possibly dummy input variables  
there exists a (planar) dynamical system on two variables $\xi,\zeta$ in $[0,1]^2$:

   $$\frac{d\xi}{dt}=r(\xi)(\zeta - p)$$
   $$\frac{d\zeta}{dt}=-w(\zeta)(\xi - q)$$
   
\noindent   
with smooth functions $r,w>0$ for $\in [0,1]$ and $r(0)=r(1)=w(0)=w(1)=0$,
such that if $(\xi_0,\zeta_0)=\big(f($\mbox{\boldmath $x_0$}$)$,
$g($\mbox{\boldmath $x_0$}$)\big)$ then $(\xi(t),\zeta(t))=\big(f($\mbox{\boldmath $x$}(t)$)$,
$g($\mbox{\boldmath $x$}(t)$)\big)$ for all $t$.  
\end{prop}
\end{comment}
%
\subsection{Constants of Motion, Hamiltonian Systems, and Periodicity}  
\label{subsection:Hamiltonian}

Finally, we establish that the two dimensional system that couples $f, g$ together is effectively a conservative system that preserves an energy-like function. It is easy to check that if in the definition of our $(p,q,r,w)$-planar dynamical system we set the functions $r,w$ being everywhere equal to $1$ then after a change of variables $f=\zeta-q, g=\xi-p$, the dynamical system has the form: $ \frac{df}{dt}=g= \frac{\partial H}{\partial g}, \frac{dg}{dt}=-f= -\frac{\partial H}{\partial f}$,
%\begin{eqnarray*}
%\frac{df}{dt}&=&g ~(= \frac{\partial H}{\partial g}) \\
%\frac{dg}{dt}&=&-f ~(= -\frac{\partial H}{\partial f}) 
%\end{eqnarray*}
%\noindent
which is a prototypical Hamiltonian system with Hamiltonian function equal to $H=\frac{f^2+g^2}{2}$. All of its trajectories are cycles centered at $0$.

The conserved quantity, i.e., the ``constant of the motion", in our case is described in the following lemma and
leveraging it we will establish that the system trajectories are periodic.

\begin{lem}
\label{lem:inv}
The quantity $$H(\xi,\zeta)=\int^\xi_p \frac{z-p}{r(z)}dz + \int^\zeta_q \frac{z-q}{w(z)}dz$$ is a
constant of the motion (first integral) of the $(p,q,r,w)$-planar dynamical system, \text{i.e.},
it is time-invariant given any initial condition.
\end{lem}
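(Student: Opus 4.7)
The plan is to verify directly that $H$ is conserved along trajectories by computing $\frac{dH}{dt}$ via the chain rule, using the fundamental theorem of calculus to evaluate the partial derivatives of $H$. This is the standard recipe for exhibiting a first integral of a planar system of the form $\dot\xi = F_1(\xi)G_1(\zeta)$, $\dot\zeta = F_2(\xi)G_2(\zeta)$, namely ``separate variables and integrate.''

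First I would observe that, because $r$ and $w$ are smooth and strictly positive on $(0,1)$, the integrands $(z-p)/r(z)$ and $(z-q)/w(z)$ are smooth on $(0,1)$, so $H$ is a well-defined smooth function on $(0,1)^2$ (the integrals being improper only at the endpoints $0,1$, where $r,w$ vanish, but we need $H$ only on the open square where trajectories live generically). Then, by the fundamental theorem of calculus,
\begin{equation*}
\frac{\partial H}{\partial \xi}(\xi,\zeta) = \frac{\xi-p}{r(\xi)}, \qquad \frac{\partial H}{\partial \zeta}(\xi,\zeta) = \frac{\zeta-q}{w(\zeta)}.
\end{equation*}

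Second, for any trajectory $(\xi(t),\zeta(t))$ of the $(p,q,r,w)$-planar dynamical system, the chain rule and the defining ODEs yield
\begin{equation*}
\frac{dH}{dt} = \frac{\partial H}{\partial \xi}\,\dot\xi + \frac{\partial H}{\partial \zeta}\,\dot\zeta = \frac{\xi-p}{r(\xi)}\cdot r(\xi)(\zeta-q) + \frac{\zeta-q}{w(\zeta)}\cdot\bigl(-w(\zeta)(\xi-p)\bigr),
\end{equation*}
and the two terms cancel exactly, giving $\frac{dH}{dt}\equiv 0$. Hence $H$ is constant on every orbit.

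The only subtle point, and the one I expect to be the real technical issue if one wants to be careful, is the behavior at the boundary of $[0,1]^2$: the cancellation above relies on $r(\xi)$ and $w(\zeta)$ being nonzero so that the partial derivatives are well defined. Since the sides $\{\xi\in\{0,1\}\}$ and $\{\zeta\in\{0,1\}\}$ are invariant (as $r,w$ vanish there, yielding fixed points of each coordinate), a trajectory starting in the interior stays in the interior, and the computation above is valid for all $t\in\mathbb{R}$. No approximation arguments or sign analysis are needed; the conservation is an exact algebraic identity resulting from the separable structure of the vector field.
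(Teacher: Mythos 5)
Your proposal is correct and is essentially identical to the paper's proof: both compute $\frac{dH}{dt}$ by the chain rule, use the fundamental theorem of calculus to get $\partial H/\partial\xi = (\xi-p)/r(\xi)$ and $\partial H/\partial\zeta = (\zeta-q)/w(\zeta)$, and observe the exact cancellation $(\xi-p)(\zeta-q)-(\zeta-q)(\xi-p)=0$. Your added remarks on smoothness of $H$ on the open square and invariance of the interior are sensible bookkeeping that the paper leaves implicit.
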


\begin{proof} By applying the chain rule on $\frac{dH(\xi,\zeta)}{dt}$ we have:\\
$\frac{d\big(\int^\xi_p \frac{z-p}{r(z)}dz\big)}{dt} + \frac{d\big(\int^\zeta_q \frac{z-q}{w(z)}dz\big)}{dt}= \frac{\xi-p}{r(\xi)}\frac{\partial\xi}{\partial t}+\frac{\zeta-q}{w(\zeta)}\frac{d\zeta}{\partial t}=(\xi-p)(\zeta-q)-(\zeta-q)(\xi-p)=0.$
\end{proof}

%It is easy to see that $H(p,q)=0$ and $H>0$ everywhere when it is domain.

\begin{comment}
Key idea: $f$ is a polynomial function. Thus its gradient is a matrix of polynomial functions.
A point is a strange fixed point if and only if the gradient of the $f$ is everywhere equal to zero (over all agents that play mixed strategies).
Each of them is a polynomial so has a finite number of roots (WHY? Is it nontrivial?). Their intersection is a finite set. 
In this case the image of these finite set of points is a finite set.
So the complement is a finite union of open intervals in $\Real$.

We disregard all points such that their corresponding trajectories in subsystems $A$ contain a strange fixed point.
This is a zero measure set over all initial conditions.

I will argue that the complement of this points lie on periodic trajectories.
I will embed the trajectory in a  restriction of a planar dynamical system without any fixed points. (Compact versus open)
In this set the values of $f$ are all regular.
Hence the inverse image is a manifold (without boundary) of dimension $1$.
For some reason this is connected and bounded and hence homemorhpic a cycle.

Hence it has a finite number of roots over $\Real^n$.
And thus the
\end{comment}

\begin{thm}
\label{thm:periodic}
  Every safe initial condition  $(\xx_0,\yy_0)$ lies on a periodic orbit of $\Phi$.
%If an orbit of a strict  $(p,q,r,w)$-planar dynamical system stays bounded away from the boundary as well as the unique interior fixed point it is closed, \textit{i.e.}, a periodic orbit.  
\end{thm}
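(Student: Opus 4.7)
The plan is to combine the conserved quantity from Lemma~\ref{lem:inv} with a properness argument to show that the planar orbit of $(f(t), g(t))$ lies on a compact level curve diffeomorphic to a circle, and then lift periodicity back to the full $(\xx(t), \yy(t))$-dynamics via Proposition~\ref{prop:1}.

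First I would set $c := H(f(\xx_0), g(\yy_0))$. If $c = 0$, the structure of $H$ forces $(f(\xx_0), g(\yy_0)) = (p,q)$; by Lemma~\ref{lem:sys}, this makes all $\dot{x}_i$ and $\dot{y}_j$ vanish, so $(\xx_0,\yy_0)$ is a fixed point of $\Phi$ and periodicity is trivial. So assume $c > 0$. By Lemmas~\ref{lem:inv} and~\ref{lem:safe_planar}, the planar orbit $(\xi(t), \zeta(t))$ stays inside $L_c := \{(\xi, \zeta) \in (0,1)^2 : H(\xi, \zeta) = c\}$ for all $t \in \Real$.

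The key technical step is to show that $H : (0,1)^2 \to [0,\infty)$ is \emph{proper}, i.e.\ that $H(\xi,\zeta) \to +\infty$ as $(\xi,\zeta)$ approaches $\partial([0,1]^2)$. Since $r$ is smooth on $[0,1]$ with $r(0)=0$, the mean value theorem yields $r(z) \le K z$ for some constant $K$ and all sufficiently small $z$; hence $\frac{|z-p|}{r(z)} \ge \frac{p}{2Kz}$ near $0$, which has a non-integrable singularity. Therefore $\int_p^\epsilon \frac{z-p}{r(z)}\,dz \to +\infty$ as $\epsilon \to 0^+$. The symmetric estimate at $z=1$ and the analogous estimates for $w$ handle the other three boundary segments. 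Because both integrands in $H$ are nonnegative on the half-intervals separating their endpoints from $p$ and $q$, $H$ blows up whenever any one of $\xi, \zeta$ approaches $0$ or $1$. Consequently $L_c$ is closed and bounded inside $(0,1)^2$, hence compact.

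Since $\nabla H = \bigl(\frac{\xi-p}{r(\xi)}, \frac{\zeta-q}{w(\zeta)}\bigr)$ vanishes on $(0,1)^2$ only at $(p,q)$, which lies in $L_0$, the level set $L_c$ contains no critical points of $H$. By the regular value theorem, $L_c$ is a compact smooth $1$-manifold without boundary, i.e.\ a finite disjoint union of smoothly embedded circles. Let $C$ denote the component containing the initial point. The planar vector field is smooth and nowhere-vanishing on $L_c$ (the only interior fixed point $(p,q)$ is not on $L_c$), so by the standard classification of flows on $S^1$ without fixed points, the planar orbit equals $C$ and is periodic with some period $T > 0$. Finally, Proposition~\ref{prop:1} provides smooth maps $X_i^{\xx_0}, Y_j^{\yy_0}$ such that $x_i(t) = X_i^{\xx_0}(f(t))$ and $y_j(t) = Y_j^{\yy_0}(g(t))$ for all $t$, so $(f(t+T), g(t+T)) = (f(t), g(t))$ immediately yields $(\xx(t+T), \yy(t+T)) = (\xx(t), \yy(t))$. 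The principal obstacle is establishing properness of $H$ at the boundary of $(0,1)^2$; with that in hand the rest is a standard application of the classification of compact $1$-manifolds and of fixed-point-free flows on the circle.
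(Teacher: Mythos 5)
Your proof is correct, but it takes a noticeably different route from the paper's. The paper first invokes the Poincar\'{e}--Bendixson theorem on a trapping region of the form $\{\alpha < H < \beta\}$ to conclude that the $\alpha,\omega$-limit sets are periodic orbits, and only then uses the regular value theorem and the classification of closed connected $1$-manifolds to identify the trajectory (together with its limit sets) with a circle. You bypass Poincar\'{e}--Bendixson entirely: by proving that $H$ is proper on $(0,1)^2$ (blowing up at the boundary because $r(z)\le Kz$ near $z=0$ makes $\frac{z-p}{r(z)}$ non-integrable there, and similarly at the other three sides), you get that the level set $L_c$ is a compact regular level set, hence a disjoint union of circles, and then the standard fact that a nowhere-vanishing flow on a circle is periodic finishes the planar argument; the lift via Proposition~\ref{prop:1} is the same in both treatments. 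Your properness step is a genuine added value: the paper simply asserts that the trapping region $\{(x,y): 0<\alpha<H<\beta\}$ is contained in $(0,1)^2\setminus(p,q)$, and justifying that containment requires exactly the boundary blow-up of $H$ that you prove. Your version is also arguably cleaner at the end, since the paper's step identifying ``the union of the trajectory with its $\alpha,\omega$-limit sets'' as a closed connected $1$-manifold is left somewhat informal, whereas your argument works directly on the compact level curve. One small stylistic point: your phrase ``both integrands in $H$ are nonnegative on the half-intervals'' should really say that each of the two \emph{integrals} contributes a nonnegative amount (the integrand $\frac{z-p}{r(z)}$ is negative for $z<p$, but the orientation of $\int_p^{\xi}$ makes the contribution nonnegative); the mathematics you rely on is nonetheless right.
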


\begin{proof}
%Given a safe initial condition $(\xx_0,\yy_0)$, by continuity of $r,w,H$ we can create a trapping region $C=\{(x,y): 0<\alpha<H(\xx,\yy)<\beta\}\subset (0,1)^2$ in this respective $(p,q,r,w)$-planar dynamical system such that $(\xx_0,\yy_0)\in C,$
%and $r,w >0$ everywhere in $C$. 
From lemma \ref{lem:safe_planar} the system corresponds to a $(p,q,r,w)$- planar system.
If  $(\xx_0,\yy_0)$  is a Nash fixed point of $\Phi$ then %it maps to $(p,q)$ in the planar systems and as a fixed point, 
it is trivially a periodic point.
Suppose $(\xx_0,\yy_0)$ is not a Nash fixed point, then either $f\neq p$ or $g \neq q$ (or both). 
In all cases  $H\big(f(\xx_0),g(\yy_0)\big)>0$ and due to lemma \ref{lem:inv} the trajectory of the planar system
stays bounded away from its unique interior equilibrium $(p,q)$, since $H(p,q)=0$.
 Moreover, the gradient of $H$ at $(\xi,\zeta)$ is equal to $(\frac{\xi-p}{r(\xi)},\frac{\zeta-q}{w(\zeta)})$
 and thus we  can create a trapping/invariant region $C=\{(x,y): 0<\alpha<H(\xx,\yy)<\beta\}\subset (0,1)^2\setminus (p,q)$. 
 %$C$ contains no fixed points since $H>0$ everywhere.
By the Poincar\'{e}-Bendixson theorem and since the trapping (invariant) regions does not contain any fixed points the $\alpha,\omega$-limit set of the trajectory is a periodic orbit.
%\ref{lem:inv}
% and
%$C$ contains no fixed point.
Since the gradient of $H$ is only equal to $\bf 0$ at $(p,q)$, $H\big(f(\xx_0),g(\yy_0)\big)$ is a regular value of $H$. 
By the regular value theorem $H^{-1}\Big(H\big(f(\xx_0),g(\yy_0)\big)\Big)$ is a manifold of dimension $1$.
 %and furthermore it is closed by continuity of $H$. 
 The union of the trajectory starting at $(\xx_0,\yy_0)$, along with its $\alpha,\omega-$limit sets, is a closed, connected 1-manifold and thus it is isomorphic to 
$S^1$ (see Appendix \ref{Section:Background}). 
%Since the trajectory stays bounded away from fixed points, it is periodic.
\end{proof}
%\subsection{Constructive Hamiltonians for XOR-XOR games}

%\subsection{Chaos in Rock-Paper-Scissor}

%THEOREM putting everything together....

At this point, we are ready to piece together all our structural characterizations of the system trajectories to derive our first main theorem:

\begin{thm}
\label{thm:main1}
%Given any two-team zero-sum game defined by two Boolean functions, so long as all  equilibria are isolated, then all but a zero measure set of initial conditions are on periodic trajectories of the replicator dynamics.
Given any two-team zero-sum game defined by two Boolean functions, so long as all  equilibria/fixed points are isolated, then all but a zero measure set of initial conditions lie on periodic trajectories of the replicator dynamics
\end{thm}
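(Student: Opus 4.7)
The plan is to assemble the structural results of Section~\ref{section:analysis} into a single chain of implications: genericity of isolated equilibria $\Rightarrow$ almost all initial conditions are safe $\Rightarrow$ the orbit of $(f,g)$ is periodic $\Rightarrow$ the orbit of $(\xx,\yy)$ is periodic. First I would invoke Theorem~\ref{thm:safe}, whose hypothesis is exactly that all fixed points of $\Phi$ are isolated, to discard a Lebesgue-null set of initial conditions and reduce to the case where $\zz_0=(\xx_0,\yy_0)$ is safe. This step is essentially a citation, but it is the only place where the isolated-fixed-points assumption of Theorem~\ref{thm:main1} is actually used.

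Next, fix a safe $\zz_0$. I would apply Lemma~\ref{lem:safe_planar} to produce a $(p,q,r,w)$-planar dynamical system in the two variables $(\xi,\zeta)$ whose solution with initial data $(f(\xx_0),g(\yy_0))$ coincides with $(f(\xx(t)),g(\yy(t)))$ for all $t\in\Real$. Theorem~\ref{thm:periodic} then gives periodicity of this planar orbit: either $\zz_0$ is already a Nash fixed point (trivially periodic), or the constant of motion $H$ from Lemma~\ref{lem:inv} is strictly positive and bounds the orbit away from the unique interior equilibrium $(p,q)$, after which Poincar\'e--Bendixson together with the regular value theorem force the orbit to be a closed $1$-manifold homeomorphic to $S^{1}$. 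So $(f(t),g(t))$ traces a periodic curve with some period $T>0$.

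The remaining obstacle, and the most delicate point conceptually, is to lift this periodicity from the $2$-dimensional summary $(f,g)$ back to the full state $(\xx(t),\yy(t))\in [0,1]^{n+m}$. Here I would invoke Proposition~\ref{prop:1}, which provides smooth functions $X_i^{\xx_0}$ and $Y_j^{\yy_0}$ such that $x_i(t)=X_i^{\xx_0}(f(t))$ and $y_j(t)=Y_j^{\yy_0}(g(t))$ for all $t$. Since these maps depend only on the initial condition and on the instantaneous values of $f,g$, the relation $(f(t+T),g(t+T))=(f(t),g(t))$ immediately implies $(\xx(t+T),\yy(t+T))=(\xx(t),\yy(t))$, so the full trajectory is periodic with the same period.

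Finally I would package the quantification: the set of non-safe initial conditions has measure zero by Theorem~\ref{thm:safe}, and every safe initial condition has just been shown to lie on a periodic orbit of $\Phi$, which is exactly the statement of Theorem~\ref{thm:main1}. I expect the main conceptual burden to be the lifting step via Proposition~\ref{prop:1}, since a priori a $1$-dimensional projection of a high-dimensional orbit need not reflect the recurrence of the orbit itself; the strict Lyapunov property of $f$ in subsystem $A$ (and of $g$ in subsystem $B$) from Proposition~\ref{prop:AindB} is what makes the lift valid, and any formalization should make explicit that safeness is precisely what keeps $f^{-1}_{\xx_0}$ well-defined throughout the orbit.
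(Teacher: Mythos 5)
Your proposal is correct and follows exactly the same route as the paper's proof: Theorem~\ref{thm:safe} to reduce to safe initial conditions, Lemma~\ref{lem:safe_planar} together with Theorem~\ref{thm:periodic} for periodicity of $(f,g)$, and Proposition~\ref{prop:1} to lift that periodicity back to the full state $(\xx(t),\yy(t))$. Your added remark that the strict Lyapunov property of $f$ (and safeness keeping $f^{-1}_{\xx_0}$ well-defined) is what legitimizes the lifting step is a fair gloss on why Proposition~\ref{prop:1} holds, but it is not a departure from the paper's argument.
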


\begin{proof}
By theorem \ref{thm:safe} since all the system equilibria are isolated, then all but a measure zero set of initial conditions are safe. By lemma \ref{lem:safe_planar} and \ref{thm:periodic}, the projection of these trajectories on the space of outputs of each of the two teams, \textit{i.e.} on the space $(f, g)$ is periodic. Finally, by \ref{prop:1} given a safe initial condition $\zz_0=(\xx_0,\yy_0)$, of system $\Phi$ as well that the values  $f(t),g(t)$, 
 there exist smooth functions $X^{\xx_0}_i,Y^{\yy_0}_j: [0,1] \rightarrow [0,1]$ such that $x_i(t)= X^{\xx_0}_i(f(t))$ $($resp.  $y_j(t)= Y^{\yy_0}_j(g(t))$$)$ for all $t\in \Real$
 and for each agent $i$ of team $A$ $($resp. for each agent $j$ of team $B)$, thus the periodicity of $f(t), g(t)$ translates to a periodic orbit on the space of system behaviors, \textit{i.e.} on $(\xx(t), \yy(t))$ and the proof is complete.
\end{proof}

\section{Time Averages, Connections to Equilibria and Utility}
\label{section:time_average}

Next, we will show that the time-average of the periodic trajectories of replicator dynamics satisfies some interesting game theoretic properties. In order to discuss these properties, it is useful to provide a reminder on some of the most basic solution concepts in game theory.

\begin{comment}
\begin{definition}
	A \emph{Nash equilibrium} (NE) is a vector of distributions \( (p_i^*)_{i \in I} \in \prod_{i \in I} \Delta(S_i) \) such that \( \forall i \in I, \forall p_i \in  \Delta(S_i) \)
	\[
		c_i(p_i^*, p_{-i}^*) \leq c_i(p_i, p_{-i}^*)
	\]
	An \emph{\( \epsilon \)-Nash equilibrium} for \( \epsilon > 0 \) is one such that
	\[
		c_i(p_i^*, p_{-i}^*) \leq c_i(p_i, p_{-i}^*) + \epsilon
	\]
\end{definition}
\end{comment}

We give the definition of a correlated equilibrium, from \cite{aumann1974subjectivity}.

\begin{definition} A \emph{correlated equilibrium} (CE) is a distribution \( \pi \) over the set of action profiles \( S = \prod_{i} S_i \) such that for all player \( i \) and strategies \( s_i, s_i' \in S_i, s_i \neq s_i' \),
	\[
		\sum_{s_{-i} \in S_{-i}} u_i(s_i, s_{-i}) \pi(s_i, s_{-i})
		\geq \sum_{s_{-i} \in S_{-i}} u_i(s_i', s_{-i}) \pi(s_i, s_{-i})
	\]
\end{definition}

\noindent
We will also make use of  the coarse correlated equilibrium (\cite{young2004strategic}), which is exactly the set of distribution that no-regret algorithms converge to. This convergence is only set-wise, \textit{i.e.}, distance of the time average behavior of no-regret dynamics and the set of CCE converges to zero, however, the time-average play may never converge to a specific CCE.

\begin{definition}
	A \emph{coarse correlated equilibrium} (CCE) is a distribution \( \pi \) over the set of action profiles \( S = \prod_{i} S_i \) such that for all player \( i \) and strategy \( s_i \in S_i \),
	\[
		\sum_{s \in S} u_i(s) \pi(s)
		\geq \sum_{s_{-i} \in S_{-i}} u_i(s_i, s_{-i}) \pi_i(s_{-i})
	\]
	where \( \pi_i(s_{-i}) = \sum_{s_i \in S_i} \pi(s_i, s_{-i}) \) is the marginal distribution of \( \pi \) with respect to \( i \).
\end{definition}

First, we will show that the time-average distribution over the space of strategy outcomes over any periodic orbit is a coarse correlated equilibrium. Furthermore, the time-average of the output of each team $f, g$ corresponds to the unique Nash equilibrium of the $2\times2$ zero-sum game. Finally, the expected utilities of all agents correspond to the value of each of their respective teams in their zero-sum game. In effect, the sexual replicator dynamics enable the agents of each team to collaborate with each other so as to optimally solve the zero-sum game against the opposing team. Figure \ref{fig:test2} shows specific examples of periodic trajectories where time-averaging over them converges to the solution of a Matching Pennies game\footnote{This is a rescaled Matching Pennies game where all utilities are nonnegative and the value of the game is $0.5$.} between two teams. 

\begin{thm}
\label{thm:correlated}
Given any periodic orbit, the time average distribution of play over strategy outcomes converges point-wise to a specific correlated equilibrium.
\end{thm}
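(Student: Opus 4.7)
I would let $\mu_t$ denote the product distribution on outcomes $(\sss,\ssigma)\in\{0,1\}^n\times\{0,1\}^m$ induced by $(\xx(t),\yy(t))$ at time $t$, and take
\[ \pi(\sss,\ssigma)\defeq\frac{1}{T}\int_0^T \mu_t(\sss,\ssigma)\,dt, \]
where $T$ is the period of the (non-trivial) orbit granted by Theorem \ref{thm:main1}. Point-wise convergence of the running averages $\frac{1}{T'}\int_0^{T'}\mu_t\,dt\to\pi$ as $T'\to\infty$ is immediate: writing $T'=kT+r$ with $0\le r<T$ and $k$ a positive integer, the full-period integral enters with weight $k/T'$ and the $r$-remainder contributes at most $O(1/T')$, so each coordinate converges to the corresponding coordinate of $\pi$.

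\textbf{Key calculation.} Next I would verify the CE inequalities directly. Fix an agent $i$ in team $A$ and consider the deviation from recommendation $0$ to $1$. Because $\mu_t$ factorises at each $t$ and the $\mu_t$-mass attached to outcomes with $s_i=0$ is precisely $x_i(t)$, the CE inequality reduces (after summing over $\sss_{-i},\ssigma$) to
\[ \frac{1}{T}\int_0^T x_i(t)\big[u_{i0}(t)-u_{i1}(t)\big]\,dt\;\geq\;0. \]
Equation (\ref{eq:repli_A}) rewrites the integrand as $\dot x_i/(1-x_i)=-\frac{d}{dt}\ln(1-x_i)$, so periodicity $x_i(0)=x_i(T)$ forces the integral to vanish, and the CE inequality in fact holds with equality. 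The reverse deviation $1\to 0$ is the same telescoping: $(1-x_i)(u_{i1}-u_{i0})=-\dot x_i/x_i=\frac{d}{dt}\ln x_i$ also integrates to $0$. Agents in team $B$ are handled identically starting from (\ref{eq:repli_B}); the extra sign merely interchanges which endpoint cancels, and periodicity again delivers equality.

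\textbf{Well-definedness and main obstacle.} The one analytic subtlety is that the log-integrals above are sensible only when $x_i(t),y_j(t)$ stay bounded away from $\{0,1\}$ on the orbit. This follows from the invariance of the boundary faces $x_i\in\{0,1\}$ and $y_j\in\{0,1\}$ under $\Phi$ (visible in the factor $x_i(1-x_i)$ in equation (\ref{eq:repli_A})): a coordinate that is not identically $0$ or $1$ stays in $(0,1)$ for all $t$, and by compactness of the periodic orbit it achieves a positive minimum distance from $\{0,1\}$. Coordinates that are identically $0$ or $1$ merely restrict the system to a lower-dimensional sub-game and render their own CE constraints trivial on one side and vacuous on the other. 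Beyond this bookkeeping there is no real obstruction; the whole argument is a telescoping identity enabled by the $x_i(1-x_i)$ factor of replicator dynamics, and in fact delivers equality rather than mere inequality --- a stronger statement that is precisely what will let one later identify the $(f,g)$-marginals of $\pi$ with the unique Nash equilibrium of the underlying $2\times 2$ zero-sum game (Theorem \ref{thm:main2}).
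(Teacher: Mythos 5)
Your proof is correct and follows essentially the same route as the paper's: the same telescoping of $\int_0^T \dot x_i/x_i\,dt$ and $\int_0^T \dot x_i/(1-x_i)\,dt$ over one period, which the paper phrases as the coincidence of the time-averaged $u_{i0}$, $u_{i1}$ and $\hat u_i$ (hence a coarse correlated equilibrium), followed by the observation that with two actions per agent this already yields a correlated equilibrium, whereas you verify the per-recommendation swap inequalities directly --- a slightly cleaner packaging of the identical computation. The only blemishes are a harmless sign slip ($-\dot x_i/x_i = -\tfrac{d}{dt}\ln x_i$, not $+\tfrac{d}{dt}\ln x_i$; the period integral vanishes either way) and the side remark about coordinates frozen at $0$ or $1$, where the constraint for the strategy that \emph{is} always recommended is not actually automatic --- though, like the paper, you only need the interior case guaranteed for safe initial conditions.
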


\begin{proof}
%Given any safe initial condition, theorem \ref{thm:periodic} establishes that the corresponding trajectory is periodic. 
The time average of play is well defined and converges to a unique distribution over the space of strategy outcomes. Since the trajectory is periodic and the interior of the state space is invariant, the trajectory stays bounded away from the boundary of the state space. However, in this case  the time average of the trajectory of the replicator converges to a coarse correlated equilibrium. More specifically, we will show that if any individual agent deviates to any fixed strategy then the time average of his expected utility does not decrease.\footnote{Interestingly,  we will show that it does not decrease either. The time average any agent's expected utility remains invariant given any deviation to a fixed strategy.}  The replicator equation 
$\dot{x}_i = x_{i}(1-x_{i})\big(u_{i0} - u_{i1}\big)$  is equivalent to $\dot{x}_i = x_{i}\big(u_{i0} - \hat{u}_{i}\big)$ as well as 
$\dot{x}_i = -(1-x_{i})\big(u_{i1} - \hat{u}_{i}\big)$, where
$\hat{u}_{i} = x_{i}u_{i0} + (1-x_{i})u_{i1}$, \textit{i.e.}, the expected utility of agent $i$ when taking into account his randomized action as well.

Next, we will isolate the probability related terms $x_{i}$ on the LHS and all the utility related terms on the RHS and we will integrate over a time interval $[0,T]$ and divide by $T$. I.e.,

\begin{equation}
\label{eq:time_average1}
\frac{\int_0^T\frac{1}{x_{i}} \dot{x}_i dt}{T}= \frac{\int_0^T\big(u_{i0} - \hat{u}_{i}\big)dt}{T}
\end{equation}

\begin{equation}
\label{eq:time_average2}
\frac{\int_0^T-\frac{1}{1-x_{i}} \dot{x}_i dt}{T}= \frac{\int_0^T\big(u_{i1} - \hat{u}_{i}\big)dt}{T}
\end{equation}

However, by a simple change of variables we have that $\int_0^T\frac{1}{x_{i}} \dot{x}_i dt= \ln[x_i(T)]-\ln[x_i(0)]$, 
$\int_0^T-\frac{1}{1-x_{i}} \dot{x}_i dt= -\ln[1-x_i(T)]+\ln[1-x_i(0)]$. The LHS of equations \ref{eq:time_average1}, \ref{eq:time_average2} converge to zero as $T\rightarrow \infty$. Moreover, since the trajectories are periodic (\textit{e.g.}, with period $T_P$), all the following limits exist $\lim_{t\rightarrow \infty}\frac{\int_0^T u_{i0}dt}{T}=\frac{\int_0^{T_P} u_{i0}dt}{T_P}$,
$\lim_{t\rightarrow \infty}\frac{\int_0^T u_{i1}dt}{T}=\frac{\int_0^{T_P} u_{i1}dt}{T_P}$, 
$\lim_{t\rightarrow \infty}\frac{\int_0^T \hat{u}_{i} dt}{T}=\frac{\int_0^{T_P} \hat{u}_{i} dt}{T_P}$.
Thus, for any agent $i$: $\frac{\int_0^{T_P} \hat{u}_{i} dt}{T_P}=\frac{\int_0^{T_P} u_{i0}dt}{T_P}=\frac{\int_0^{T_P} u_{i1}dt}{T_P}$ and since no agent can improve their time average utility over a single period by deviating to any fixed strategy, then, by definition, the time average distribution of play over a single period is a coarse correlated equilibrium of the team zero-sum game.

 Moreover, since in this game any agent has exactly two available strategies any coarse correlated equilibrium is also a correlated equilibrium. This is true, as the only extra allowable deviation that needs to be checked\footnote{Checked in terms of whether it can improve the agent's expected utility.} is the one where he flips his strategy, \textit{i.e.}, whenever he played strategy $1$ he now deviates to strategy $0$, whereas whenever he played $0$ he deviates to $1$. However, we know that each of these deviations cannot improve the agent's expected utility (by the coarse correlated equilibrium property), so the combined deviation does not improve his expected utility either.
% $\frac{\frac{u_{i1}[t]}}{T}$
\end{proof}

\begin{comment}
\begin{definition}{\bf $(p,q,r,w)$-planar dynamical system:} We define the following class of planar dynamical systems on $[0,1]^2$ parametrized by a point $(p,q) \in (0,1)^2$ and two smooth functions $r,w$ defined on $[0,1]$ with  $r(0)=r(1)=w(0)=w(1)=0$ and which are nonnegative in $(0,1)$. 
Given such $p,q,r,w$ we define a $(p,q,r,w)$-planar dynamical system as follows:
   \begin{eqnarray*}
   \frac{d\xi}{dt}&=&r(\xi)(\zeta - q)\\
   \frac{d\zeta}{dt}&=&-w(\zeta)(\xi - p)
   \end{eqnarray*}
    If $r,w$ are strictly positive in  $(0,1)^2$ we call the system a $(p,q,r,w)$-planar dynamical system. 
\end{definition}
\end{comment}

\begin{thm}
\label{thm:main2}
Given any periodic orbit\footnote{We implicitly assume that this corresponds to a generic, safe initial condition.} 
the time average output  $\int_0^T f(t)dt, \int_0^T g(t)dt$ of each team converges to the unique fully mixed Nash equilibrium of the team game, i.e.  $p, q$ respectively. Moreover, the time average expected utility of each agent converges to the value of his team in the $2\times 2$ zero-sum game $U$. 
\end{thm}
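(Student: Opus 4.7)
The plan is to work in the reduced planar system from Lemma~\ref{lem:safe_planar}, where $(f(t),g(t))$ satisfies $\dot f = r(f)(g-q)$ and $\dot g = -w(g)(f-p)$, and to extract both statements as consequences of integrating suitable primitives of $1/r$ and $1/w$ over a full period. Since the orbit is periodic and $r,w$ vanish only at $\{0,1\}$, I first need to know that $r(f(t))$ and $w(g(t))$ remain strictly positive along the orbit. This follows from the uniqueness of ODE solutions: if $f(t_0)\in\{0,1\}$ at some instant, then the orbit through $(f(t_0),g(t_0))$ coincides with the invariant line $\{f=f(t_0)\}$, on which $\dot g = \pm p\,w(g)$ has constant sign, contradicting periodicity (and similarly for $g$). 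Hence the primitives defined below are continuous along the orbit.

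For the first claim, rewrite the first equation of the planar system as $g-q = \dot f / r(f) = \tfrac{d}{dt}\Phi(f)$, where $\Phi(f):=\int_{f(0)}^{f}\tfrac{dz}{r(z)}$. Integrating over one period $T_P$,
\[
\int_{0}^{T_P}(g(t)-q)\,dt \;=\; \Phi(f(T_P))-\Phi(f(0)) \;=\; 0
\]
by periodicity of $f$, so $\frac{1}{T_P}\int_0^{T_P}g(t)\,dt = q$. Applying the same trick to $\dot g = -w(g)(f-p)$ gives $\frac{1}{T_P}\int_0^{T_P}f(t)\,dt = p$. This is the Lotka--Volterra-style time-averaging identity, which works precisely because each equation factors into a function of its own variable times a function of the other variable.

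For the second claim, I will expand the bilinear payoff $U(f,g)$ around the Nash point $(p,q)$. Since $q$ by definition makes team $A$ indifferent between its two pure outputs, $U(f,q)\equiv v$ for every $f$, where $v$ is the value of the $2\times 2$ game; combining this with the linearity of $U$ in $g$ and a short computation yields the identity
\[
U(f,g) \;=\; v + \alpha\,(f-p)(g-q),\qquad \alpha = a-b-c+d.
\]
Every agent $i$ of team $A$ has instantaneous payoff $\hat u_i(t)=U(f(t),g(t))$, so it remains to show $\int_0^{T_P}(f-p)(g-q)\,dt=0$. Define $G(f):=\int_{f(0)}^{f}\tfrac{z-p}{r(z)}\,dz$; then $\dot G = \tfrac{f-p}{r(f)}\dot f = (f-p)(g-q)$, and integrating over a full period together with $f(T_P)=f(0)$ gives the desired vanishing. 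Thus $\overline{\hat u_i}=v$ for each team-$A$ agent; by the zero-sum property, $\overline{\hat u_j}=-v$ for each team-$B$ agent, which is exactly the value of team $B$ in $U$.

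The main obstacle is the boundary-avoidance step in the first paragraph, because once $\Phi$ and $G$ are continuous along the orbit the conclusions reduce to the fundamental theorem of calculus plus periodicity. Care is also needed regarding the degenerate case when the initial condition is already the Nash fixed point $(p,q)$: then $f\equiv p,\ g\equiv q$ and both assertions hold trivially, so the argument above applies to every non-degenerate periodic orbit and covers the full measure set of safe initial conditions identified in Theorem~\ref{thm:main1}.
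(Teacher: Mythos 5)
Your proof is correct and follows essentially the same route as the paper's: reduce to the $(p,q,r,w)$-planar system, integrate $\dot f/r(f)=g-q$ to average out $g$, and expand the payoff about the Nash point so that the utility claim reduces to $\int_0^{T_P}(f-p)(g-q)\,dt=0$, again killed by a primitive in $f$ alone. The only (harmless) differences are cosmetic: you integrate over exactly one period and invoke $f(T_P)=f(0)$ where the paper lets $T\to\infty$ and uses boundedness of the primitive, and you supply an explicit boundary-avoidance argument that the paper asserts in a footnote.
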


\begin{proof}
By lemma \ref{lem:safe_planar}, 
we have that the outputs $f, g$ satisfy a  $(p,q,r,w)$-planar dynamical system:

   \begin{eqnarray*}
   \frac{df}{dt}&=&r(f)(g - q)\\ %\label{equation:plan1}\\
   \frac{dg}{dt}&=&-w(g)(f - p) %\label{equation:plan2}
   \end{eqnarray*}

Since $f,g$ are strictly positive along the periodic orbit,\footnote{Since it lies in the interior of $[0,1]^2$ and by the definition of our  $(p,q,r,w)$-planar dynamical systems.} we can isolate $f, g$ on different parts of the equation, and then we integrate over a time interval $[0,T]$ and divide by $T$.

  \begin{eqnarray}
  \frac{\int_0^T \frac{1}{r(f)}\frac{df}{dt} dt}{T}&=& \frac{\int_0^T(g - q)dt}{T}\label{equation:plan1}\\
    \frac{\int_0^T \frac{1}{w(g)}\frac{dg}{dt} dt}{T}&=&-  \frac{\int_0^T(f - p)dt}{T}\label{equation:plan2}
   \end{eqnarray}

However, by a simple change of variables we have that $\int_0^T \frac{1}{r(f)}\frac{df}{dt} dt=\int_0^T \frac{d}{dt}(\int \frac{1}{r(f)}df) dt = \int_{f(0)}^{f(T)}\frac{1}{r(f)}df$. Similarly, $\int_0^T \frac{1}{w(g)}\frac{dg}{dt} dt= \int_{g(0)}^{g(T)}\frac{1}{w(g)}dg$.
Hence, $\lim_{T\rightarrow \infty} \frac{\int_0^T \frac{1}{r(f)}\frac{df}{dt} dt}{T}= \lim_{T\rightarrow \infty} \frac{\int_{f(0)}^{f(T)}\frac{1}{r(f)}df}{T}=0$. Similarly,   $\lim_{T\rightarrow \infty} \frac{\int_0^T \frac{1}{w(g)}\frac{dg}{dt} dt}{T}=0$.
Therefore, $\lim_{T\rightarrow \infty} \frac{\int_0^T g dt}{T}= \frac{\int_0^{T_P} g dt}{T_P}=q$ and similarly,  
$\lim_{T\rightarrow \infty} \frac{\int_0^T f dt}{T}= \frac{\int_0^{T_P} f dt}{T_P}=p$.

\bigskip
Next, we will proceed with the argument about the time average of the agents' utility. 
Let $\vec{f}= \begin{pmatrix} 
f  \\
1-f
\end{pmatrix}$ and $\vec{g}= \begin{pmatrix} 
g  \\
1-g
\end{pmatrix}$
then the expected utility all agents in team $A$ is equal to ${\vec{f}}^T U \vec{g}$.
Let $\vec{p}= \begin{pmatrix} 
p  \\
1-p
\end{pmatrix}$ and $\vec{q}= \begin{pmatrix} 
q  \\
1-q
\end{pmatrix}$ then $({\vec{f}-\vec{p}})^T U (\vec{g}- \vec{q})= {\vec{f}}^T U \vec{g} - {\vec{p}}^T U \vec{q}$, since $\vec{p}, \vec{q}$  are fully mixed Nash equilibrium strategies of the $2\times2$ zero-sum game. So, we have that:

$$\frac{1}{T}\int_0^T({\vec{f}-\vec{p}})^T U (\vec{g}- \vec{q}) dt =\frac{1}{T}\int_0^T {\vec{f}}^T U \vec{g}dt -\frac{1}{T}\int_0^T {\vec{p}}^T U \vec{q}dt = \frac{1}{T}\int_0^T {\vec{f}}^T U \vec{g}dt - {\vec{p}}^T U \vec{q}$$

Therefore in order to argue that $\frac{1}{T}\int_0^T {\vec{f}}^T U \vec{g}dt = {\vec{p}}^T U \vec{q}$, \textit{i.e.}, that the time average of agents' utility is equal to the value of their respective team in the $2\times2$ zero-sum game $U$, it suffices to show that 
$\lim_{T\rightarrow \infty}\frac{1}{T}\int_0^T({\vec{f}-\vec{p}})^T U (\vec{g}- \vec{q}) dt =0$. The payoff matrix $U$ is as follows:

$$
U =
 \begin{pmatrix}
  a & b \\
  c & d
 \end{pmatrix}
 $$

We have that $({\vec{f}-\vec{p}})^T U (\vec{g}- \vec{q})= (a-b-c+d)(f-p)(g-q)$. Therefore it suffices to show that 

 $$\lim_{T\rightarrow \infty}\frac{1}{T}\int_0^T(f-p)(g-q) dt =0.$$
 
 By equation \ref{equation:plan1}, we have already argued that $\lim_{T\rightarrow \infty}\frac{1}{T}\int_0^T(g-q) dt =0$, thus finally we have to show that  $\lim_{T\rightarrow \infty}\frac{1}{T}\int_0^Tf(g-q) dt =0.$ Reworking the equations of the $(p,q,r,w)$-planar dynamical system:

$$ \frac{f}{r(f)} \frac{df}{dt}=f(g - q) \Rightarrow \frac{\int_0^T \frac{f}{r(f)} \frac{df}{dt} dt}{T} =\frac{\int_0^T f(g - q) dt}{T} \Rightarrow \frac{\int_{f(0)}^{f(T)} \frac{f}{r(f)}df}{T}=\frac{\int_0^T f(g - q) dt}{T}.$$

However, $\int_{f(0)}^{f(T)} \frac{f}{r(f)}df$ is bounded and hence   $\lim_{T\rightarrow \infty}\frac{\int_{f(0)}^{f(T)} \frac{f}{r(f)}df}{T}=0$, implying  $\lim_{T\rightarrow \infty}\frac{\int_0^T f(g - q) dt}{T}=0$, and the proof is complete.
\end{proof}

\begin{comment}
The proof comes from the fact that (f,g) the outputs of the two teams satisfy the dynamical system in definition 9. The $(p,q,r,w)$-planar dynamical system:

We isolated all f, g terms of separate sides

(1/r(f)) *df/dt   =  (g-q)  

Intergrate over [0,T] and divide by T on both sides.

The left had side converges to 0 as T goes to infinity because it is the difference between two values of a well defined function \int (1/r(f)) df,  and the right hand side is equal the difference between the time average of the output of the second team and the NE of that team.
\end{comment}

\section{Discussion}
\label{section:Discussion}

We have 
%shown that the employment of simple learning dynamics by very large numbers of self-interested agents 
identified a \textit{novel class of conservative dynamical systems} that arise from the simutaneous application of learning dynamics by many (independently acting) agents, provided those agents partition into two teams, with aligned interests within each team and opposed interests across the teams. These learning dynamics are precisely those widely studied recently in learning theory; they are also those studied in the evolutionary theory of sexually reproducing species (``species"=``team of genes"). Beyond having a conserved quantity, these high-dimensional dynamical systems have several properties, each of which was surprising: (a) The dynamics are periodic for almost all initial conditions. (b) The dynamics of the agents not 
%  that lie on the intersection of computer science (Boolean logic), game theory (zero-sum games, potential games, team zero-sum games) and the mathematical theory of evolution and natural selection (replicator dynamics).  On top of that, these high dimensional non-linear dynamical systems are shown to be \textit{periodic} for almost all initial conditions. Moreover, when we interpret these results from a game theoretic perspective, we get a system that not 
  only \textit{minimize regret} (converge to coarse correlated equilibria) but furthermore the time average of any trajectory \textit{converges point-wise to a correlated equilibrium} that is not necessarily a Nash equilibrium.
  (c) The time average play of these trajectories are shown to \textit{implement the minmax strategies of each species} in its struggle for survival against the opposing species.  
  (d) The time average utility of each agent, under these dynamics, is actually the same as if each team was cooperating to play its best strategy at every moment in time---despite the fact that the agents are not cooperating (the distribution of play by a team is always a product distribution across agents), and that their play is not constant in time nor converging to a constant in time.

Each of these properties is, to our knowledge, without precedent in high dimensional dynamics of learning / sexual evolution. 
% Each of these observations alone is rare and unexpected. For example, although cyclic 
Cyclic trajectories are known to exist in low dimensional game theoretic systems, systems of species competition or even hypercycle equations, but no such result is known  for larger systems  \cite{Hofbauer98}. Conversely, there exist many simple low dimensional examples where replicator dynamics can have  complex trajectories \cite{Sandholm10}.
% We hope to understand how to extend these positive  results further and  how to design systems that leverage this structure.

The results of this paper are obtained entirely in an infinite-population, continuous-time limit. Finite-population, discrete-time models have a number of defects, not least of which are that they have too many adjustable parameters, and that rounding and drift terms turn conserved quantities into ``almost conserved" quantities. Having said this, the issue cannot be ignored, as there are known situations in which finite-population models have qualitatively different behavior than their corresponding infinite-population model~\cite{Arora:1994:SQD:195058.195231}. In our situation, the periodicity theorem actually guarantees that the dynamics of the system keep every allele frequency bounded away from $0$ and $1$; in the short run this precludes irreversible rounding errors that occur when such a frequency is rounded to $0$ or $1$. Of course, roundings by $1/\text{population}$ must occur, so in the long run, such rounding may accumulate until the dynamics are very far from the infinite-limit cycle, and then, eventually, roundings to $0$ or $1$ can occur. For moderate time scales we do not expect this to affect the predictions of the paper. (For long time scales one must keep in mind that our theorems depend also on other idealizations of the model, notably the weak selection hypothesis and the absence of other effects such as mutations or selective mating patterns.)

\bibliographystyle{plain}
\bibliography{sigproc4}

\appendix

\section{Proof of Lemma \ref{lem:sys}}
\label{Appendix:ProofSys}

\begin{proof}
By assumption  the $2\times2$ zero sum game  has a unique Nash equilibrium with $p=\frac{d-c}{a-b-c+d}, q=\frac{d-b}{a-b-c+d}$. By substitution in equation \ref{eq:repli_A} we derive:

\begin{eqnarray*}
\dot{x}_i &=& x_{i}(1-x_{i})\big(u_{i0} - u_{i1}\big)\\
              &=& x_{i}(1-x_{i})(f_{i0}(ga+(1-g)b)+(1-f_{i0})(gc+(1-g)d)-\\
              &-&  f_{i1}(ga+(1-g)b)-(1-f_{i1})(gc+(1-g)d))\\
              &=& x_{i}(1-x_{i})( f_{i0}-f_{i1})\big(g(a-b-c+d)-(d-b)\big) \\
              &=& (a-b-c+d)x_{i}(1-x_{i})( f_{i0}-f_{i1})\big(g-\frac{d-b}{a-b-c+d}\big) \\ 
              &=&  \alpha x_i(1-x_i)(f_{i0}-f_{i1})(g-q)
\end{eqnarray*}
\noindent
where $\alpha =  (a-b-c+d)\neq 0$ by assumption. By substitution in equation \ref{eq:repli_B} we derive the analogous system for team $B$. 
\end{proof}

\section{Proof of Theorem \ref{thm:safe}}
\label{Appendix:ProofSafe}

%\begin{thm}
%If the fixed points of $\Phi$ are isolated then all but a measure $0$ set of its initial conditions  are safe.
%\end{thm}

%\noindent
%We will focus on these initial conditions for the rest of the paper.
The high level idea is that
by definition, if $\xx_0$ and $\yy_0$ are safe for subsystems  $\Phi^A$ and $\Phi^B$   respectively then $(\xx_0,\yy_0)$ is safe for system $\Phi$ as well.
However,  subsystems $\Phi^A, \Phi^B$  correspond to replicator dynamics being  applied to a potential game. By applying a game theoretic characterization of stable equilibria in potential games for replicator dynamics developed in  \cite{Kleinberg09multiplicativeupdates} we can show that replicator dynamics does not converge to such randomized states for all but a zero measure of initial conditions in each subsystem. Thus, all but a measure 0 set of  initial conditions are safe in $\Phi$ as well. 

\noindent
We will start the formal proof by providing the relevant definitions and theorems from \cite{Kleinberg09multiplicativeupdates}. 

\begin{definition}[\cite{Kleinberg09multiplicativeupdates}]
\label{def:weakly}
A Nash equilibrium is called {\bf weakly stable} if each agent $i$ remains indifferent between the strategies in the support of his (mixed) strategy $\xx_i$ whenever any other single player $j$ modifies his mixed strategy to any pure strategy in the support of his (mixed) strategy $\xx_j$.
\end{definition}

\begin{thm}[\cite{Kleinberg09multiplicativeupdates}]
\label{thm:Klein}
For all initial conditions replicator dynamics converges to equilibria in potential games.
If  a fixed point that is not a Nash equilibrium, the set of initial conditions converging to is of zero measure.
If a fixed point is not a weakly stable Nash equilibrium, the set of initial conditions converging to is of zero measure.
%From all but a measure $0$ set of starting points, replicator dynamics converges to weakly stable Nash equilibria in potential games.
\end{thm}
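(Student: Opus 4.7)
\textbf{Proof proposal for Theorem \ref{thm:Klein}.} I would separate the three claims and handle them in order, since each reuses the tools built for the prior one. For claim (1), the plan is to use the potential itself as a strict Lyapunov function: denoting the common utility of the partnership game by $\Phi(\xx)$ and noting that in a common-payoff game $\partial \Phi/\partial x_{i\gamma}$ equals $u_{i\gamma}(\xx)$, a direct calculation against the replicator vector field $\dot x_{i\gamma}=x_{i\gamma}(u_{i\gamma}-\bar u_i)$ gives
\[
\dot\Phi=\sum_{i}\sum_\gamma x_{i\gamma}u_{i\gamma}(u_{i\gamma}-\bar u_i)=\sum_i \mathrm{Var}_{\gamma\sim x_i}[u_{i\gamma}]\ \ge\ 0,
\]
with equality only on the fixed-point set $\F$ of $\Phi$. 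Since the product simplex is compact and forward-invariant, LaSalle's invariance principle forces the $\omega$-limit set of every trajectory to lie in $\F$. To upgrade convergence of the $\omega$-limit set to pointwise convergence, I would invoke the fact that the vector field is real-analytic (indeed polynomial) and apply a Łojasiewicz-type gradient inequality, which guarantees that a bounded trajectory of an analytic gradient-like flow with strict Lyapunov function has a single limit point; this is the standard way the result is obtained in the potential-games literature.

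For claim (2), fix a fixed point $\xx^\ast\in\F$ that is not a Nash equilibrium. By the not-Nash assumption, some agent $i$ has a pure strategy $\delta$ with $\delta\notin\mathrm{supp}(x^\ast_i)$ and $u_{i\delta}(\xx^\ast)>\bar u_i(\xx^\ast)$. I would compute the Jacobian $J$ of the replicator vector field at $\xx^\ast$ and observe that the unit vector pointing into the $x_{i\delta}$ direction (tangent to the simplex after compensating in the support) is an eigenvector with eigenvalue $u_{i\delta}(\xx^\ast)-\bar u_i(\xx^\ast)>0$; this is a classical computation for replicator dynamics at a rest point on a face of the simplex. The Stable Manifold Theorem then implies that the set of initial conditions in a neighborhood of $\xx^\ast$ whose forward trajectories converge to $\xx^\ast$ is contained in the local stable manifold $W^s_{loc}(\xx^\ast)$, a smooth submanifold of strictly smaller dimension, hence of Lebesgue measure zero. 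The global basin of attraction $W^s(\xx^\ast)=\bigcup_{t\ge 0}\Phi^{-t}(W^s_{loc}(\xx^\ast))$ is a countable union of smooth images of a lower-dimensional manifold under the smooth flow, so it is also a Lebesgue null set. Because $\F$ is defined by finitely many polynomial equations and hence has finitely many connected components in each face of the simplex, taking the union over non-Nash fixed points preserves the null property.

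For claim (3), I would repeat the Jacobian analysis, but this time at a Nash equilibrium $\xx^\ast$ that fails the weakly-stable condition of Definition \ref{def:weakly}. The failure of weak stability means there exist agents $i\neq j$ and pure strategies $\gamma\in\mathrm{supp}(x^\ast_i)$, $\delta\in\mathrm{supp}(x^\ast_j)$ such that if agent $j$ deviates to $\delta$ deterministically then agent $i$ strictly prefers some strategy $\gamma'\in\mathrm{supp}(x^\ast_i)$ over $\gamma$. The plan is to exhibit a two-dimensional invariant ``corner'' plane in tangent coordinates spanned by the $(x_{i\gamma},x_{j\delta})$ directions and to show that the restriction of $J$ to this plane has positive trace (equivalently a positive real eigenvalue), using the indifference at $\xx^\ast$ to kill the diagonal entries and the failure of weak stability to produce an off-diagonal term with the right sign. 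With an unstable eigendirection identified, the Stable Manifold Theorem argument of claim (2) applies verbatim to conclude that the basin of attraction of $\xx^\ast$ has measure zero, and a union bound over the finitely many (generically isolated) such equilibria finishes the proof.

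The main obstacle is the Jacobian bookkeeping in claim (3): at a fully mixed Nash equilibrium the ``obvious'' diagonal eigenvalues vanish because of indifference, so the instability must be extracted from the off-diagonal cross-player coupling, and one has to translate the combinatorial statement ``some other player's pure deviation breaks $i$'s indifference'' into the existence of a positive eigenvalue of a $2\times 2$ submatrix. Parts (1) and (2) are comparatively routine given the Lyapunov computation and the Stable Manifold Theorem, but part (3) is where the precise structural hypothesis of weak stability has to be matched to the spectral condition, and getting this matching right is the crux.
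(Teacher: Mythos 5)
First, note that the paper does not prove Theorem~\ref{thm:Klein} at all: it is imported verbatim from \cite{Kleinberg09multiplicativeupdates} and used as a black box in Appendix~\ref{Appendix:ProofSafe}. So the comparison here is against the argument in that reference, whose architecture (potential as strict Lyapunov function, spectral analysis at bad fixed points, invariant-manifold theorem, per-fixed-point measure-zero basin) your proposal correctly reproduces for claims (1) and (2). Your Lyapunov computation $\dot\Phi=\sum_i \mathrm{Var}_{\gamma\sim x_i}[u_{i\gamma}]$ is right, and for a non-Nash fixed point the transversal eigenvalue $u_{i\delta}(\xx^\ast)-\bar u_i(\xx^\ast)>0$ in an unused direction $\delta$ is exactly the right unstable direction (at a rest point every strategy in the support earns $\bar u_i$, so failure of Nash must come from an unused strategy).

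There are two genuine gaps. The first affects claims (2) and (3) equally: the set of initial conditions converging to $\xx^\ast$ is \emph{not} contained in the local stable manifold when $\xx^\ast$ is non-hyperbolic, and these fixed points are typically non-hyperbolic (at a mixed Nash equilibrium the diagonal Jacobian entries vanish by indifference, so there are center directions). You need the center-stable manifold theorem: the converging set lies in $W^{cs}_{loc}(\xx^\ast)$, whose codimension equals the number of eigenvalues with strictly positive real part, hence is at least one once you have exhibited an unstable eigenvalue; this is what \cite{Kleinberg09multiplicativeupdates} invokes. The second gap is in the crux of claim (3) and is an internal inconsistency: you propose to show the $2\times 2$ block has ``positive trace'' while simultaneously (and correctly) observing that indifference kills both diagonal entries, so the trace is zero. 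The actual mechanism is different: in a partnership game the cross-player Jacobian entries satisfy $J_{(i\gamma)(j\delta)}=x^\ast_{i\gamma}c$ and $J_{(j\delta)(i\gamma)}=x^\ast_{j\delta}c$ for the \emph{same} mixed partial $c$ of the potential, so the block is conjugate by a positive diagonal matrix to a symmetric matrix with zero diagonal; its eigenvalues are therefore real and sum to zero, and failure of weak stability says precisely that some such $c\neq 0$, forcing a strictly positive eigenvalue (equivalently, negative determinant, not positive trace). You also need this symmetrization to pass from the $2\times2$ principal submatrix to the full Jacobian (via interlacing, or by arguing directly on the trace and sum of squares of the full restricted block, as the original proof does); for a general non-symmetric matrix a principal submatrix with a positive eigenvalue tells you nothing about the spectrum of the whole matrix. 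With these two repairs the proposal matches the known proof.
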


\begin{lem}
\label{lem:weakly}
If $\xx$ is an isolated, not pure, fixed  point of subsystem $A$, \textit{i.e.},  with at least one randomizing agent $i$ such that $f_{i0}=f_{i1}=f$,   %for each agent $i$ in team $A$, 
then it cannot be a weakly stable Nash equilibrium in the partnership/potential game where all agents receive in each outcome utility equal to the output of their Boolean function $f(\xx)$. A symmetric statement holds for $\yy$, isolated, not pure, fixed  point of subsystem $B$.
\end{lem}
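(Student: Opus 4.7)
My approach is by contrapositive: I assume $\xx$ is a weakly stable Nash equilibrium of the partnership game with common utility $f$, not pure (so the set $R$ of randomizing agents is nonempty), and I derive a one-parameter family of nearby fixed points of subsystem $A$, contradicting isolation. The key leverage is the multilinearity of $f$ in each coordinate, which is automatic since $f$ is a Boolean function evaluated on a product distribution.

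Let me set up notation: for each randomizing agent $i \in R$ define $h_i(\xx_{-i}) \defeq f_{i0}(\xx_{-i}) - f_{i1}(\xx_{-i})$. This is multilinear in each $x_k$ for $k \neq i$. At a fixed point with $i \in R$ we have $h_i(\xx_{-i}) = 0$ (that is the indifference condition). Pick any $k \in R$ (exists because $\xx$ is not pure) and consider the perturbation $\xx'(\epsilon) \defeq \xx + \epsilon e_k$ for small $\epsilon$. I want to show $\xx'(\epsilon)$ is still a fixed point of $\Phi^A$ for every $\epsilon$ small enough that $x_k + \epsilon \in (0,1)$, which would violate isolation.

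The verification splits into three cases. For any agent $j$ with $x_j \in \{0,1\}$ (pure coordinate), $x'_j \in \{0,1\}$ automatically, so the equation $x'_j(1-x'_j)(f_{j0}-f_{j1}) = 0$ is trivially satisfied. For $j = k$ itself, note $\xx'(\epsilon)_{-k} = \xx_{-k}$ is unchanged by the perturbation, so $h_k\bigl(\xx'(\epsilon)_{-k}\bigr) = h_k(\xx_{-k}) = 0$. The crucial case is $j \in R \setminus \{k\}$: here $\xx'(\epsilon)_{-j}$ differs from $\xx_{-j}$ only in the $k$-th coordinate. Because $\xx$ is a weakly stable Nash equilibrium, deviating agent $k$ (which is a randomizing agent, hence has both pure strategies in the support of $\xx_k$) to either pure strategy $s_k \in \{0,1\}$ leaves $j$ indifferent, i.e.\ $h_j(x_k = 0, \xx_{-j,-k}) = 0$ and $h_j(x_k = 1, \xx_{-j,-k}) = 0$. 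Since $h_j$ is linear in $x_k$ (the other coordinates fixed at $\xx_{-j,-k}$), these two values determine the function along the $x_k$-line, so $h_j(x_k, \xx_{-j,-k}) = 0$ for all $x_k$, and in particular $h_j\bigl(\xx'(\epsilon)_{-j}\bigr) = 0$.

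Putting the three cases together, $\xx'(\epsilon)$ satisfies the fixed-point equations of $\Phi^A$ for every sufficiently small $\epsilon$, yielding a continuum of fixed points arbitrarily close to $\xx$ and contradicting the hypothesis that $\xx$ is isolated. The symmetric argument handles team $B$. The only real subtlety is making sure multilinearity is applied in the right variable (the deviator $k$, not the responder $j$), which is why the weak-stability definition---which lets $k$ be moved to either $0$ or $1$ independently---supplies exactly the two evaluations needed to pin down the linear function $h_j$ along the $k$-direction; this is the step I expect a reader to want to see spelled out carefully.
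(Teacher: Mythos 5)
Your proof is correct and follows essentially the same route as the paper's: both argue the contrapositive, use weak stability to get that a randomizing agent's two pure deviations leave every other randomizing agent indifferent, and then invoke linearity of $f_{j0}-f_{j1}$ in the perturbed coordinate to produce a continuum of fixed points contradicting isolation. The only cosmetic difference is that the paper sweeps the deviator's probability over all of $[0,1]$ while you perturb by a small $\epsilon$; the substance is identical.
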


\begin{proof}
We will prove the contrapositive.
If an equilibrium is weakly stable that means by definition \ref{def:weakly} that
given any single deviation of a randomizing agent $i$ to a pure strategy no other randomizing agent can deviate and improve his/her payoff.
However, this means that in both outcomes $(0,${\boldmath $x$}$_{-i})$, $(1,${\boldmath $x$}$_{-i})$ (for which $f_{i0}=f_{i1}=f$ by the  fixed point property) 
no  randomizing agent can deviate and increase his/her payoff. This implies that $f_{i0i'0}=f_{i0i'1}=f_{i0}=f$ for any randomizing agent $i'$ in  $\xx$ (of team $A$). 
Similarly, that $f_{i1i'0}=f_{i1i'1}=f_{i1}=f$ for any  randomizing agent $i'$  in team $A$.
Hence the outcomes $(0,${\boldmath $x$}$_{-i})$, $(1,${\boldmath $x$}$_{-i})$ are fixed points of subsystem $A$.
 Moreover, as long as exactly two (randomizing) agents, %where at least one is agent $i$, 
  fix their behavior then the expected output of the Boolean function does not change. 
 We will show that any product distribution $(\delta,${\boldmath $x$}$_{-i})$
where $0<\delta<1$  is also a fixed point. It suffices to check all the (randomizing) agents other than $i$.
If this set is empty we are done. Otherwise,
 let $i'$ be such an agent.
Any such agent when deviating to a pure strategy receives utility equal to the expected output of team's $A$ Boolean function.
For example if he deviates to his first strategy then he receives $f_{i\delta i'0}=\delta f_{i0i'0}+(1-\delta)f_{i1i'0}=f$.
Hence,   $(\delta,${\boldmath $x$}$_{-i})$ is a fixed point as well for any $\delta\in [0,1]$ and the initial  fixed point {\boldmath $x$} is not isolated.
 %any possible deviations 
 %then w
\end{proof}
%\section{Tightness of the characterization}

\noindent
We are now ready to complete the proof of Theorem \ref{thm:safe}.

\begin{proof}
 By definition,  if $\xx_0$ and $\yy_0$ are safe for subsystems  $\Phi^A$ and $\Phi^B$   respectively then $(\xx_0,\yy_0)$ is safe for system $\Phi$ as well.
However,  subsystems $\Phi^A, \Phi^B$  correspond to replicator dynamics being  applied to a potential game. 
By lemma \ref{lem:weakly} we have that an isolated, not pure, fixed  point of subsystem $A$ or $B$,
cannot be a weakly stable Nash equilibrium in the partnership/potential game where all agents receive in each outcome utility equal to the output of their Boolean function $f(\xx)$. By applying Theorem \ref{thm:Klein} the set of initial conditions that converge to any such equilibrium as $t\rightarrow \infty$ is a measure zero set. By applying Theorem \ref{thm:Klein} on the cost minimization potential game\footnote{A fully mixed fixed point (or more generally a fixed point where for each agent $i$ his expected utility remains constant when unilaterally deviating to any other strategy) is a Nash equilibrium in both the payoff maximization and the cost minimization game, \textit{i.e.}, it does not matter if you receive or pay $f(x)$ and in both games for the same reasons it is not a weakly mixed Nash. On the other hand, if there exists an agent $i$ such that when unilaterally  deviating to another strategy his payoff decreases then in the cost game this fixed point is no longer a Nash equilibrium and Theorem  \ref{thm:Klein} still implies a zero measure of initial conditions converging to it.} where at any outcome $\xx$ agents have to pay cost equal to the output of their Boolean function $f(\xx)$, we derive that the set of initial conditions that converge in subsystem $A$ to any such equilibrium as $t\rightarrow -\infty$ is still a measure zero set. Finally, by assumption all system equilibria are isolated and thus there can only be finitely many of them.\footnote{Otherwise, given any countable set of equilibria by compactness of the state space there will be a concentration point in this sequence, which leads to a contradiction since every fixed point is isolated.}  So, by taking union bound over all not pure, isolated fixed points  we have that the set of all initial conditions converging to them as $t\rightarrow \pm\infty$ in each subsystem is a zero measure set. Since the state space $\Phi$ is the product of the state spaces of $\Phi^A, \Phi^B$, the set of all such conditions $(\xx_0,\yy_0)$ such that either $\xx_0$ or $\yy_0$ are not safe is a measure zero set of the state space of $\Phi$. 
\end{proof}

\section{Background on Topology, Manifolds, and Dynamics}
\label{Section:Background}

\begin{definition}
Let $U \subset \Real^n$ and $V \subset \Real^k$ be open sets. A mapping $f$ from
$U$ to $V$ written as $f : U \rightarrow V$ is a smooth map if all of the partial derivatives
exist and are continuous.
\end{definition}

\noindent
More generally we have the following:

\begin{definition}
Let $X \subset \Real^n$ and $Y \subset \Real^k$ be arbitrary subsets, and a map
$f : X \rightarrow Y$ is called smooth map if for every $x \in X$ there exist an open set
$O \subset R^n$ with $x \in O$ and a smooth mapping $F : O \rightarrow \Real^k$ that coincides with
$f$ throughout $O \cap X$.
\end{definition}

%http://www.map.mpim-bonn.mpg.de/1-manifolds
%http://math.stackexchange.com/questions/913616/regular-value-theorem-using-implicit-function-theorem-in-calculus

A $1$-manifold (or manifold of dimension $1$) is a topological space which is second countable (\textit{i.e.}, its topological structure has a countable base), 
satisfies the Hausdorff axiom (\textit{i.e.}, any two different points have disjoint neighborhoods) and each point of which has a neighborhood homeomorphic either to the real line $\Real$ or to the half-line $\Real_+=\{x\in \Real:x\geq 0\}$.

\begin{thm}[\cite{fuks1984beginner}]
Any connected closed $1$-manifold is homeomorphic to $S^1= \{(x,y)\in \Real^2: x^2+y^2=1\}$.
\end{thm}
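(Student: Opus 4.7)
The plan is to exploit compactness together with the rigidity of the local chart structure of a $1$-manifold. Since $M$ is closed (compact without boundary) and connected, every point of $M$ has a neighborhood homeomorphic to $\Real$ (the half-line case is excluded by the absence of boundary). First I would cover $M$ by such chart neighborhoods $\{(U_\alpha,\varphi_\alpha)\}$ where each $\varphi_\alpha\colon U_\alpha\to\Real$ maps homeomorphically onto an open interval, and extract a finite subcover $U_1,\ldots,U_N$ using compactness. The finiteness of the cover is what will ultimately terminate the construction and rule out the alternative model $\Real$ itself.

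The core structural lemma that makes the classification work is this: whenever $U_i\cap U_j\neq\emptyset$, each connected component of $U_i\cap U_j$ is carried by the transition map $\varphi_j\circ\varphi_i^{-1}$ to an open subinterval of $\Real$, and on each component this transition is a strictly monotone homeomorphism. Moreover, because each $U_i$ is itself homeomorphic to an open interval, $U_i\cap U_j$ has at most two connected components. I would isolate and prove this lemma first, since it is the rigidity statement that prevents branching and ultimately forces the global topology to be a circle rather than something more exotic.

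Next I would construct a homeomorphism $h\colon S^1\to M$ by an inductive ``walk'' around $M$. Starting from $U_1$ with its parameterization $\varphi_1$, identify $U_1$ with an open arc of $S^1$. At each subsequent step, pick a chart $U_{i_k}$ from the finite cover that overlaps the currently parameterized arc only at one of its two endpoints, reparametrize $\varphi_{i_k}$ (reversing orientation if necessary) so that the transition function on the overlap is monotonically increasing and extends the existing parameterization continuously, and enlarge the arc accordingly. Since the cover is finite, the walk cannot continue forever introducing new charts; by connectedness, either the process covers $M$ and a new chart must overlap the initial chart $U_1$, forcing the parameterization to close up, or one would violate the structural lemma. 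At the closing step I would verify that the orientation inherited through the chain of charts matches $\varphi_1$'s orientation on the final overlap, so that the closed-up map is a well-defined homeomorphism from $S^1$ onto $M$.

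The main obstacle is the bookkeeping at the closure step: one must show that the orientation around the loop is globally consistent, since an inconsistent identification would create a point with no Euclidean neighborhood, contradicting the manifold hypothesis. A clean way to handle this rigorously is to consider a maximal injective parameterization $\gamma\colon I\to M$ of an open arc by an interval (existence via a greedy argument or Zorn's lemma), and then use compactness together with the local Euclidean structure at the limit points of $\gamma(I)$ in $M$ to conclude that $\gamma$ extends to a continuous surjection $\bar\gamma$ from a closed interval that identifies its two endpoints, i.e.\ factors through $S^1$. Injectivity of the resulting map $S^1\to M$ then follows from the Hausdorff property and the structural lemma, and it is a homeomorphism by compactness of $S^1$ and Hausdorffness of $M$. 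Compactness is essential throughout: it guarantees termination of the walk, supplies the limit points needed to close the arc, and distinguishes $M\cong S^1$ from the only other connected $1$-manifold $M\cong \Real$.
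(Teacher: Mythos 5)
The paper does not prove this statement at all: it is quoted verbatim in the background appendix as a standard classification theorem, with a citation to the Fuks--Rokhlin textbook, and is used as a black box in the proof of Theorem \ref{thm:periodic}. So there is no ``paper proof'' to match against; what you have written is an outline of the classical argument (essentially the one in the cited reference, and the same as the appendix of Milnor's \emph{Topology from the Differentiable Viewpoint} adapted to the topological category). Your plan is structurally sound: compactness gives a finite atlas of interval charts, the crux is the rigidity lemma that transition maps are monotone on components and that two interval charts can meet in at most two components, and the finite ``walk'' either extends the arc or closes it up into $S^1$; the maximal-parameterization variant via Zorn's lemma is also a legitimate way to organize the closure step. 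Two remarks. First, the heart of the theorem is precisely the two-component lemma, and your proposal asserts it rather than proves it; the standard argument is that each component of $\varphi_i(U_i\cap U_j)$ must have at least one endpoint at an end of the interval $\varphi_i(U_i)$ (otherwise the transition homeomorphism would extend past the component, contradicting maximality of the component, or violate Hausdorffness), and an open interval has only two ends. You correctly flag this as the lemma to prove first, but a referee would want it carried out. Second, your worry about orientation at the closing step is lighter than you suggest: in dimension one there is no M\"obius-type obstruction, since one may always flip the parameterization of the final chart, and gluing the two ends of an arc by either orientation yields $S^1$; the cleaner statement is that once a chart meets the growing arc in components at both of its ends, the union is open, closed, and compact, hence all of $M$ by connectedness, and is visibly a circle.
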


\begin{definition}
Let $U, V$ be manifolds.
A map $f : U \rightarrow V$ is called a diffeomorphism if $f$ carries
$U$ onto $V$ and also both $f$ and $f^{-1}$ are smooth.
\end{definition}

\begin{definition}
Let $f : U \rightarrow V$ be a smooth map between same dimensional manifolds. We denote that $x \in U$ is a regular point if the derivative is nonsingular.
$y \in V$ is called a {\bf regular value} if $f^{-1}(y)$ contains only regular points.
If the derivative is singular, then $x$ is called a {\bf critical point}. We also say $y \in V$ is a critical value if $y$ is not a regular value.
\end{definition}

For a smooth function on $\Real^n$ to $\Real$, a point  $p$  is critical if all of the partial derivatives of the function are zero at $p$, or, equivalently, if its gradient is zero. Given a differentiable map $f$ from $R^m$ into $R^n$, the critical points of $f$ are the points of $R^m$, where the rank of the Jacobian matrix of $f$ is not maximal. 
%The image of a critical point under $f$ is a called a critical value. A point in the complement of the set of critical values is called a regular value.

\begin{thm}[Sard's Theorem] 
Let $f : X \rightarrow Y$ be a smooth map of manifolds, and let $C$ be the set of critical points of $f$ in $X$. Then $f(C)$ has measure zero in $Y$.
\end{thm}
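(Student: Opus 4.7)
The plan is to prove Sard's Theorem via the classical argument due to Sard (and Whitney), reducing to a local statement in Euclidean coordinates and then inducting on the dimension of the source manifold.

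First I would reduce to the Euclidean setting. Because $X$ and $Y$ are second countable, each can be covered by countably many coordinate charts, and a countable union of measure-zero sets has measure zero. So it suffices to prove: for every smooth $f\colon U\to\mathbb{R}^n$ with $U\subset\mathbb{R}^m$ open, the image $f(C)$ of the critical set $C=\{x\in U:\operatorname{rank}Df(x)<n\}$ has Lebesgue measure zero in $\mathbb{R}^n$. I would then proceed by induction on $m$, with the base case $m=0$ being trivial (and the case $m<n$ following easily since the image of $U$ itself has measure zero by a covering argument using the Lipschitz bound on $f$ over small cubes).

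For the inductive step, I would introduce the filtration $C\supset C_1\supset C_2\supset\cdots$, where $C_i$ denotes the set of points at which all partial derivatives of $f$ of order $\leq i$ vanish. The proof then splits into three pieces, each handled separately. (1) Show $f(C\setminus C_1)$ has measure zero: at any $x\in C\setminus C_1$ some first partial is nonzero, so after a local diffeomorphism one can assume $f$ has the form $(x_1,\tilde{x})\mapsto(x_1,g(x_1,\tilde{x}))$, which reduces each level $\{x_1=t\}$ to a lower-dimensional Sard statement; then Fubini concludes. (2) Show $f(C_k\setminus C_{k+1})$ has measure zero for $1\leq k<m$: a similar straightening of a nonvanishing $(k+1)$-st partial reduces to the induction hypothesis. (3) Show $f(C_k)$ has measure zero for $k$ large enough (specifically $k>m/n-1$): here one uses Taylor's theorem to bound $|f(y)-f(x)|\leq M|y-x|^{k+1}$ on any small cube containing $x\in C_k$, so the image of a cube of side $\delta$ sits in a box of side $O(\delta^{k+1})$, and a covering argument with $O(\delta^{-m})$ cubes of side $\delta$ bounds the total $n$-dimensional volume by $O(\delta^{n(k+1)-m})$, which tends to zero as $\delta\to 0$.

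The main obstacle, and the crux of the argument, is piece (3): one has to choose $k$ large enough that $n(k+1)>m$, and then carefully execute the Taylor estimate with explicit constants uniform over a compact subcube of $U$. The induction then pieces everything together: $f(C) = f(C\setminus C_1)\cup\bigcup_{i=1}^{k-1}f(C_i\setminus C_{i+1})\cup f(C_k)$, a finite union of measure-zero sets. Returning to manifolds via the countable chart decomposition completes the proof.
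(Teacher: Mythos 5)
The statement you are proving is quoted in the paper's appendix purely as classical background (alongside the regular value theorem and the inverse function theorem); the paper gives no proof of it, so there is no in-paper argument to compare against. Your outline is the standard Sard--Milnor proof and is correct in its essentials: the reduction to Euclidean charts via second countability, the induction on the source dimension $m$, the filtration $C\supset C_1\supset C_2\supset\cdots$ by vanishing of higher partials, the straightening of a nonvanishing first (resp.\ $(k+1)$-st) partial to handle $C\setminus C_1$ and $C_k\setminus C_{k+1}$ by Fubini and the inductive hypothesis, and the Taylor/covering estimate giving volume $O(r^{m-n(k+1)})$ for $f(C_k)$ once $k>m/n-1$. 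Two small points to attend to if you write this out in full: the Fubini step needs the measurability lemma that a countable union of compact sets all of whose hyperplane slices are null is itself null (the critical set is $\sigma$-compact, so this is available), and the use of arbitrarily high-order Taylor expansions means the argument requires $C^\infty$ smoothness (or at least $C^{k}$ with $k\geq\max(m-n+1,1)$), which matches the smoothness hypothesis the paper adopts.
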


\begin{thm}[Regular Value Theorem]
 If $y \in Y$ is a regular value of $f : X \rightarrow Y$ then $f^{-1}(y)$ is a manifold of dimension $n-m$, since $dim(X) = n$ and 
 $dim(Y ) = m$.
\end{thm}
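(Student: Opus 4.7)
The plan is to prove the Regular Value Theorem by a local argument at every point $x \in f^{-1}(y)$, showing that a neighborhood of $x$ in $f^{-1}(y)$ is diffeomorphic to an open subset of $\mathbb{R}^{n-m}$. Since manifoldness is a local property and $f^{-1}(y)$ inherits the Hausdorff and second countability properties from $X$, this suffices.

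First I would reduce to a Euclidean local statement. Pick $x\in f^{-1}(y)$ and choose smooth charts $\varphi:U\to \mathbb{R}^n$ around $x$ and $\psi:V\to \mathbb{R}^m$ around $y$ with $f(U)\subseteq V$. Set $F=\psi\circ f\circ \varphi^{-1}$, a smooth map from an open subset of $\mathbb{R}^n$ to $\mathbb{R}^m$. By the hypothesis that $y$ is a regular value, the Jacobian $DF_{\varphi(x)}$ has rank $m$ (surjective), so some $m\times m$ submatrix of it is invertible. After permuting coordinates in the domain, we may assume the last $m$ columns of $DF_{\varphi(x)}$ form an invertible block.

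Next I would invoke the Implicit Function Theorem. Write coordinates on $\mathbb{R}^n$ as $(u,v)\in \mathbb{R}^{n-m}\times \mathbb{R}^m$ and let $\varphi(x)=(u_0,v_0)$. Because $\partial F/\partial v$ is invertible at $(u_0,v_0)$, there exist open neighborhoods $W_1\subseteq \mathbb{R}^{n-m}$ of $u_0$ and $W_2\subseteq \mathbb{R}^m$ of $v_0$ and a smooth map $h:W_1\to W_2$ such that for $(u,v)\in W_1\times W_2$,
\[
F(u,v)=\psi(y)\iff v=h(u).
\]
Thus $(\varphi(U)\cap\{F=\psi(y)\})\cap (W_1\times W_2)$ is exactly the graph of $h$, which is smoothly parametrized by $u\in W_1$ via $u\mapsto (u,h(u))$. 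Pulling this back through $\varphi^{-1}$ yields a smooth bijection from $W_1$ onto a neighborhood of $x$ in $f^{-1}(y)$ whose inverse (projection onto the first $n-m$ coordinates composed with $\varphi$) is also smooth. This gives a chart of dimension $n-m$ around $x$.

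Finally I would assemble the local charts into a manifold structure. Covering $f^{-1}(y)$ by such graph-charts, the transition maps between any two overlapping charts are compositions of $\varphi_\alpha$, $\varphi_\beta^{-1}$, and the projection/graph maps $u\mapsto(u,h(u))$, all of which are smooth; hence $f^{-1}(y)$ is a smooth $(n-m)$-manifold. The main technical nuance is making sure that the domain chart can always be adapted by a permutation so that the invertible $m\times m$ submatrix of $DF$ sits in the ``last $m$ coordinates'' position — this is an elementary linear-algebraic observation once regularity is in hand, and is the only place where the regular-value hypothesis is used. No calculation beyond invoking the Implicit Function Theorem is required.
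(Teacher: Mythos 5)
Your argument is correct: it is the standard proof of the Regular Value (Preimage) Theorem via the Implicit Function Theorem — locally straighten $f$ in charts, use surjectivity of the derivative to isolate an invertible $m\times m$ block, write the level set as a graph over the remaining $n-m$ coordinates, and check that the resulting graph-charts have smooth transitions. The paper itself offers no proof to compare against; this theorem appears only in the background appendix as a quoted classical fact (it is used once, in the proof of Theorem \ref{thm:periodic}, to conclude that a level set of the conserved quantity $H$ is a $1$-manifold). One small point worth noting: the paper's preceding definition of ``regular point'' is phrased for maps between manifolds of the \emph{same} dimension (``the derivative is nonsingular''), whereas the theorem as stated allows $n\neq m$; your proof correctly uses the appropriate general notion (surjectivity of $df_x$ at every $x\in f^{-1}(y)$), which is what the application in the paper requires ($H:\Real^2\to\Real$, nonvanishing gradient). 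You might also remark that if $f^{-1}(y)$ is empty the conclusion holds vacuously, but that is a triviality.
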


\begin{thm}[Inverse Function Theorem]
Let $U, V$ be open sets of $R^n$.
If $f : U \rightarrow V$ is a smooth map and at a point p the jacobian matrix $df_p$ is
invertible, then there is a neighborhood $U$ of $p$ on which $f : U \rightarrow f(U)$ is a
diffeomorphism.
\end{thm}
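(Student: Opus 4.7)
The plan is to prove the Inverse Function Theorem by the standard contraction-mapping approach; the statement is local, so I will work in a neighborhood of $p$ and exploit that $df_p$ is an isomorphism of $\Real^n$ to reduce to a normalized setting.

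First I would perform reductions to simplify notation. By replacing $f$ with the map $x \mapsto (df_p)^{-1}\bigl(f(x+p) - f(p)\bigr)$, we may assume $p = 0$, $f(0) = 0$, and $df_0 = I$. (The two composed linear changes are diffeomorphisms onto their images, so the conclusion for the normalized map transfers back.) Set $g(x) = x - f(x)$, so that solving $f(x) = y$ is equivalent to finding a fixed point of $x \mapsto y + g(x)$. Because $dg_0 = 0$ and $dg$ is continuous, choose $r > 0$ so that $\|dg_x\| \le \tfrac{1}{2}$ for all $x$ in the closed ball $\overline{B_r}$; then by the mean value inequality $g$ is a contraction with constant $\tfrac{1}{2}$ on $\overline{B_r}$, and in particular $\|g(x)\| \le \tfrac{1}{2}\|x\|$ there.

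Next I would invoke the Banach fixed-point theorem. For any $y$ with $\|y\| \le r/2$, the map $T_y(x) = y + g(x)$ sends $\overline{B_r}$ into itself (since $\|T_y(x)\| \le r/2 + r/2 = r$) and is a contraction, so it has a unique fixed point $x \in \overline{B_r}$; call this $x = h(y)$. This produces a well-defined inverse $h : B_{r/2} \to B_r$ with $f \circ h = \mathrm{id}$. The contraction estimate yields the Lipschitz bound $\|h(y_1) - h(y_2)\| \le 2\|y_1 - y_2\|$, giving continuity of $h$. Choosing $V' = B_{r/2} \cap f(B_r)$ and $U' = f^{-1}(V') \cap B_r$ produces an open neighborhood of $0$ on which $f$ is a bijection onto the open set $V'$, because by continuity of $f$ and shrinking if necessary, $df_x$ remains invertible throughout $U'$.

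Finally I would upgrade $h$ from continuous to smooth. For differentiability, given $y_0 \in V'$ with $x_0 = h(y_0)$, I use
\[
h(y_0 + k) - h(y_0) - (df_{x_0})^{-1} k = -(df_{x_0})^{-1}\bigl(f(x_0 + j) - f(x_0) - df_{x_0} j\bigr),
\]
where $j = h(y_0+k) - h(y_0)$, and combine with the Lipschitz bound $\|j\| \le 2\|k\|$ and differentiability of $f$ at $x_0$ to conclude $dh_{y_0} = (df_{h(y_0)})^{-1}$. Smoothness then follows by bootstrapping: the map $y \mapsto dh_y$ is a composition of the smooth maps $h$, $df$, and matrix inversion (which is smooth on $\mathrm{GL}_n$), so $h \in C^1$; iterating, $h \in C^k$ for every $k$, hence smooth. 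The main obstacle is the differentiability step, since one must simultaneously control the remainder in the linear approximation of $f$ and the Lipschitz behavior of $h$; everything else is bookkeeping once the contraction setup is in place.
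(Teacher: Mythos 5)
Your proof is correct: it is the standard contraction-mapping argument (normalize so that $df_p$ is the identity, apply the Banach fixed-point theorem to $x \mapsto y + x - f(x)$ on a small closed ball, extract a Lipschitz local inverse, verify $dh_{y} = (df_{h(y)})^{-1}$ via the remainder identity, and bootstrap to smoothness). Note, however, that the paper does not prove this statement at all — it is quoted in the background appendix as a classical theorem and used as a black box (e.g., in the proof of Proposition~\ref{prop:1}) — so there is no proof in the paper to compare against; your argument is simply the standard textbook proof, and it is complete and sound.
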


For functions of a single variable, the theorem states that if $f$ is a smooth function with nonzero derivative at the point $p$, then $f$ is invertible in a neighborhood of $p$, the inverse is smooth, and $\bigl(f^{-1}\bigr)'(f(p)) = \frac{1}{f'(p)},$ where the left side notation refers to the derivative of the inverse function evaluated at $f(p)$.

\subsection*{Poincar\'{e}-Bendixson Theorem}

The Poincar\'{e}-Bendixson theorem is a powerful theorem that implies that two-dimensional systems cannot  exhibit chaos.
Effectively, the limit behavior is either going to be an equilibrium, a periodic orbit, or a closed loop, punctuated by one (or more) fixed points.
Formally, we have:

\begin{thm}
[\cite{bendixson1901courbes,teschl2012ordinary}]
Given a differentiable real dynamical system defined on an open subset of the plane, then every non-empty compact $\omega$-limit set of an orbit, which contains only finitely many fixed points, is either
a fixed point, a periodic orbit, or a connected set composed of a finite number of fixed points together with homoclinic and heteroclinic orbits connecting these.
\end{thm}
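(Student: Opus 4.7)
The plan is to follow the classical planar argument for Poincaré-Bendixson, whose key ingredient is the Jordan Curve Theorem; the two-dimensionality is what prevents trajectories from ``weaving around'' in the way they do in higher dimensions. Throughout, let $\phi_t$ denote the flow and $\omega(x)$ the $\omega$-limit set of an orbit through $x$; I will assume the standard facts that $\omega(x)$ is closed, invariant, and (when the forward orbit is bounded) non-empty, compact, and connected.

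First I would introduce the notion of a \emph{transversal section}: a smooth arc $\Sigma$ such that the vector field is nowhere tangent to $\Sigma$ along its interior. The first core lemma is the \textbf{monotonicity lemma}: if a forward orbit meets a transversal $\Sigma$ at successive times $t_0<t_1<t_2<\dots$ at points $p_0,p_1,p_2,\dots$, then these points are monotone along $\Sigma$. The proof uses the Jordan curve: the orbit arc from $p_i$ to $p_{i+1}$ together with the sub-arc of $\Sigma$ from $p_i$ to $p_{i+1}$ forms a simple closed curve $J$; the flow crosses $\Sigma$ everywhere in the same direction, which forces the orbit after time $t_{i+1}$ to stay on one prescribed side of $J$, making $p_{i+2}$ lie strictly further along $\Sigma$ than $p_i$ (on the same side of $p_{i+1}$). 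From monotonicity I deduce the \textbf{uniqueness-on-transversal lemma}: any transversal $\Sigma$ can meet $\omega(x)$ in at most one point, because otherwise two points of $\omega(x)\cap\Sigma$ would be approached alternately by the monotone crossing sequence $(p_i)$, contradicting monotonicity.

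Next I would run the main case analysis. Since $\omega(x)$ is compact, connected, and contains only finitely many fixed points, consider two cases. Case A: $\omega(x)$ consists entirely of fixed points. Connectedness plus finiteness forces $\omega(x)$ to be a single fixed point, and we are done. Case B: $\omega(x)$ contains a regular point $y$. By invariance the entire orbit $\gamma(y)$ lies in $\omega(x)$, so $\omega(y)\subseteq\omega(x)$. Pick a regular point $z\in\omega(y)$ (if one exists) and a small transversal $\Sigma_z$ through $z$. Then $\gamma(y)$ must cross $\Sigma_z$ at a sequence of times accumulating at $z$; but every such crossing point lies in $\omega(x)\cap\Sigma_z$, which by the uniqueness lemma contains only $z$. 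Hence $\gamma(y)$ meets $\Sigma_z$ precisely at $z$ at all sufficiently large times, and by the flow-box picture this forces $\gamma(y)$ to be a periodic orbit $\Gamma$ through $z$. A second application of the transversal argument to the original orbit $\phi_t(x)$, together with monotonicity forcing the $(p_i)$ to converge to $z$ from one side, shows that $\phi_t(x)$ spirals onto $\Gamma$ and $\omega(x)=\Gamma$. If instead $\omega(y)$ contains only fixed points, then the same connectedness/finiteness argument applied to $\omega(y)$ and to $\alpha(y)$ shows each is a single fixed point, so the orbit of $y$ is a homoclinic or heteroclinic connection between fixed points in $\omega(x)$; assembling these across all such regular $y\in\omega(x)$ yields the ``graph of fixed points connected by homo/heteroclinic orbits'' description.

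I expect the main obstacle to be the monotonicity lemma, because it is the only step where planarity enters essentially and it requires a careful Jordan-curve-plus-flow-direction argument to rule out the ``wrong-side'' return; once this is in place, the uniqueness of intersections with a transversal and the spiral-to-limit-cycle conclusion follow fairly mechanically. A secondary care point is verifying the flow-box claim that an orbit meeting a transversal in a sequence converging to a point $z\in\omega$ must actually coincide with the orbit through $z$, which uses continuous dependence on initial conditions together with the transversal being crossed in a single direction.
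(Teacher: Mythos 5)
This statement is the classical Poincar\'{e}--Bendixson theorem, which the paper does not prove at all: it is quoted verbatim in the background appendix with citations to Bendixson and to Teschl's textbook, and is used as a black box in the proof of Theorem~\ref{thm:periodic}. So there is no in-paper argument to compare against. Your outline is the standard textbook proof of the cited result, and it is correct in its essentials: the transversal-section machinery, the Jordan-curve monotonicity lemma, the consequence that $\omega(x)$ meets any transversal in at most one point, and the trichotomy obtained by examining whether $\omega(x)$ contains a regular point and, if so, whether the orbit of that point has a regular point in its own limit set. The two care points you flag (the wrong-side return in the monotonicity lemma, and the flow-box argument upgrading ``crossings accumulate at $z$'' to ``the orbit through $y$ passes through $z$ and is periodic'') are exactly where the real work lies, and your treatment of the remaining case --- $\omega(y)$ and $\alpha(y)$ each reduce to a single fixed point by connectedness plus finiteness, yielding the graph of homoclinic/heteroclinic connections --- is the standard closing step. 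One small point worth tightening if you write this out in full: to conclude $\omega(x)=\Gamma$ once a periodic orbit $\Gamma\subseteq\omega(x)$ is found, the cleanest route is connectedness of $\omega(x)$ plus the one-point-per-transversal lemma applied at a point of $\Gamma$, rather than re-running the spiraling argument for the original orbit.
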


%\section{Special Case of Interest}
 
 %\subsection{XOR-XOR}
 
\end{document}